\documentclass[pdflatex,sn-mathphys]{sn-jnl}

\usepackage{algorithm}
\usepackage{graphicx}
\usepackage[font=footnotesize,labelfont=bf]{caption}
\usepackage{xcolor}
\definecolor{mygray}{gray}{0.95}
\usepackage{subcaption}

\usepackage{mathtools}
\usepackage{array}
\usepackage{multirow}
\usepackage{bigstrut}

\usepackage{bm}
\usepackage{commath}
\usepackage{amsmath}
\usepackage{empheq}

\theoremstyle{thmstyleone}
\newtheorem{theorem}{Theorem}
\newtheorem{lemma}[theorem]{Lemma}
\newtheorem{corollary}{Corollary}[theorem]
\newtheorem{assumption}{Assumption}
\newtheorem{conjecture}[theorem]{Conjecture}

\newtheorem{observation}[theorem]{Observation}

\theoremstyle{thmstyletwo}

\newtheorem{remark}{Remark}

\theoremstyle{thmstylethree}

\definecolor{Red}{RGB}{255, 99, 71}      
\definecolor{Orange}{RGB}{255, 165, 0}     
\definecolor{Blue}{RGB}{30, 144, 255}   
\definecolor{Green}{RGB}{34, 139, 34}

\newcommand{\ket}[1]{ \vert #1 \rangle }
\newcommand{\bra}[1]{ \langle #1 \vert }

\newcommand{\varphiequal}{p \,w_{s}\,c _{vp}}
\newcommand{\nnm}{\nonumber\\}

\newcommand\gammabeta{$(\vec{\gamma},\vec{\beta})\;$}
\newcommand{\costptheta}{F_p(\vec{\theta})}
\newcommand{\FpThetaG}{F_p(\vec{\theta};G)}
\newcommand{\FpThetaGm}{F_p(\vec{\theta};G_m)}

\newcommand{\vtheta}{\vec{\theta}}

\newcommand{\shots}{n_p}

\newcommand{\rvpsym}{c_{vp}}

\newcommand{\ER}{Erdős–Rényi }
\newcommand{\ERc}{Erdős–Rényi, }
\newcommand{\BS}{Benjamini-Schramm }
\newcommand{\hgt}{\hat g^t}
\newcommand{\gttheta}{\hat{g}(\vec{\theta}^{\,t})}

\newcommand{\costexpect}{\bigl\vert\langle C_p\rangle\bigr\vert}

\newcommand{\costsquare}{\langle C_p\rangle ^ 2}

\usepackage[normalem]{ulem}
\usepackage{color,soul}

\usepackage{mathtools}
 
\begin{document}

\title[Measurements Number Scaling in QAOA: A Statistical Analysis]{Measurements Number Scaling in the Quantum Approximate Optimization Algorithm for MaxCut: A Statistical Analysis}

\author[1]{\fnm{Inbar} \sur{Chefer}}

\author*[1]{\fnm{Uri} \sur{Shaham}}\email{uri.shaham@biu.ac.il}

\author*[2]{\fnm{Adi} \sur{Makmal}}\email{adi.makmal@biu.ac.il}

\affil[1]{\orgdiv{Department of Computer Science}, \orgname{Bar-Ilan University}, \orgaddress{\city{52900 Ramat-Gan}, \country{Israel}}}

\affil[2]{\orgdiv{The Engineering Faculty}, \orgname{Bar-Ilan University}, \orgaddress{\city{52900 Ramat-Gan}, \country{Israel}}}

\abstract{
We provide a statistical analysis of the measurement (shot) requirements of the quantum approximate optimization algorithm (QAOA) for the MaxCut problem.
We derive sufficient conditions on the number of shots per cost operator evaluation to: (a) estimate the expected cost to within a relative error $\delta$ and a confidence $1-\epsilon$, and (b)  ensure SGD-based parameter optimization converges to a target relative suboptimality level  with high probability. 
In addition, we provide an explicit bound on the number of SGD iterations required to reach the target accuracy.
Our analysis reveals an unexpected scaling phenomenon: for specific graph classes, which we formally characterize, the total shot budget needed to achieve a fixed relative-performance metric \textit{decreases} as the instance size grows.
This result complements earlier 
cost function concentration arguments regarding parameter optimization redundancy, thereby highlighting the potential for high-performance, low-overhead QAOA implementations for large-scale MaxCut instances. 
To assist practitioners, we translate our analytical findings into practical rules of thumb for shot-budget allocation and validate these results with numerical simulations, offering new insights into the interplay between graph size, structural complexity, and resource requirements in QAOA.}

\keywords{Quantum Approximate Optimization Algorithm, Variational quantum algorithms,  Combinatorial optimization, NP-hard problems, Stochastic gradient descent}

\maketitle
\section{Introduction}\label{sec:intro}
The quantum approximate optimization algorithm (QAOA) has emerged as a leading quantum approach for solving combinatorial optimization problems, which are fundamental to many computational tasks \cite{farhi2014quantum, Kashapogu_2024wed}. QAOA belongs to the class of variational quantum algorithms (VQAs) \cite{cerezo2021variational}, which utilize a tunable quantum circuit whose parameters are iteratively optimized using classical techniques such as stochastic gradient descent (SGD) to approach an optimal solution. As its name suggests, QAOA is an approximation algorithm, providing analytical bounds on its performance, in terms of the approximation ratio.
Moreover, at shallow circuit depths, QAOA is particularly well-suited for current and near-term noisy quantum hardware, making it a potential candidate for demonstrating quantum advantage over classical computation even before the development  of fully fault-tolerant quantum technology.
The time-to-solution (TTS) of QAOA scales linearly with the number of circuit layers ($p$),  iterations ($I$), and measurements (shots) per iteration ($n_I$) required to achieve statistical accuracy. 
Much analysis has been conducted on these quantities \cite{brandao2018fixed,zhou2020quantum,wauters2020polynomial}. For instance, it is known that the required number of layers $p$ depends on both the problem structure and the desired approximation ratio \cite{farhi2014quantum, wang_quantum_2018}. In addition, numerical studies indicate that convergence may be slow due to trainability barriers \cite{rajakumar2024trainability} and  barren plateaus (see \cite{larocca2025barren} and references therein) but that the number of iterations may be reduced using warm start or by means of machine learning \cite{zhou2020quantum,Sack_2021, amosy2024iteration}. 
In so called \emph{tree-QAOA} approaches, one leverages the fact that for small $p$ the causal cone of a local cost term is often tree-like: it is exactly so on high-girth regular graphs, and asymptotically so on regular and sparse \ER graph families.
This permits efficient contraction or dynamic-programming-style evaluation of local observables on the induced tree subgraph, yielding instance-independent or weakly instance-dependent angle choices, and thus the outer optimization loop can be completely avoided \cite{brandao2018fixed,streif2019treeqaoa,wurtz2021fixedangle,wybo2025missing}.
A recent result for high-girth 3-regular graphs further showed that, 
for fixed depth QAOA circuit with fixed tunable parameters,  
$O(m)$ measurements suffice to obtain a cut above the cost expectation with constant success probability \cite{farhi2026highgirth}, where $m$ is the number of edges in the graph. 
Finally, a common practice is to set the number of shots per cost evaluation  $(n)$ using a simple statistical heuristic to $n = \frac{\sigma^2}{\Delta^2}$, where $\sigma^2$ is the variance of the cost operator and $\Delta$ is the required accuracy. 
We note that this common method for setting the number of shots faces both practical and theoretical difficulties: first, the variance of the cost function is unknown and must be estimated independently; second, it nonetheless leads to a rigid statistical guarantee, in terms of the allowed confidence intervals. Yet more crucially, a key missing piece in the current analysis is the limited understanding of how the quantities of \( I \), and \(n\) scale with the size of the problems, e.g.\ with the number of edges in the graph.
Analyzing how QAOA’s total time-to-solution (TTS) scales and what fundamental factors govern this scaling, become non-trivial challenges. Addressing these questions is crucial for assessing QAOA’s practical feasibility on large problem instances.
In this paper, we investigate these questions within the context of the MaxCut problem for a graph $G=(V,E)$ with $N=\vert V \vert$ nodes and $m= \vert E \vert$ edges.
We focus on how graph size and local structural properties influence the measurement overhead and optimization complexity of QAOA, with a primary focus on the scaling of the required number of measurements. To be explicit, we ask: \textbf{given a sequence of graphs with an increasing number of edges $m$, how does the number of required shots in QAOA for MaxCut scale with $m$, and how does this scaling depend on the graph structure?}
The main contributions of this paper are as follows: \textbf{(a)} 
Using measure concentration bounds, in particular \textit{Janson’s inequality} \cite{janson_2004}, we rigorously bound the number of shots needed to estimate the QAOA cost to within a prescribed relative error $\delta>0$ and confidence level $1-\epsilon$ where $\epsilon\in(0,1)$. We show that when the cost expectation scales linearly with the number of edges $m$ (a condition we show typically holds), the shot count scales \textit{inversely} with $m$. 
Consequently, for graph sequences satisfying this extensive-cost assumption, the sufficient shot budget to achieve the same relative estimation accuracy decreases with $m$ \textbf{(Result A)}; \textbf{(b)} We derive a bound on the number of shots per cost evaluation required to ensure convergence within a stochastic gradient descent (SGD) framework. Specifically, using 
tools from optimization theory for $L$-smooth functions (i.e., functions with $L$-Lipschitz gradients), we quantify the 
sampling overhead necessary to reach a target relative stochastic error with respect to the optimal cost at a fixed circuit depth $p$. 
We then show that under reasonable assumptions on the graph structure, which are strongly linked to the behavior of the QAOA cost function landscape and its gradient (via the Polyak-Łojasiewicz (PL) constant $\mu$, described below), the number of shots scales inversely with the number of edges $m$ when using a finite-difference approximation for gradient estimation \textbf{(Result B)}; \textbf{(c)} We derive a bound on the number of SGD iterations required to achieve a target relative gap from the locally optimal cost attainable at circuit depth $p$. We show that under an additional assumption on the QAOA cost landscape smoothness (via the  Lipschitz constant $L$, described below) the number of iterations is independent of the graph size \textbf{(Result C)}; \textbf{(d)} We link our  results to existing literature analyzing the required QAOA resources (iterations and depth) for graphs that grow while preserving certain local structure, which we formalize using \textit{Benjamini-Schramm convergence}; \textbf{(e)} To assist practitioners, we translate our analytical findings into practical rules of thumb for shot-budget allocation; and (f) Finally, we validate our theoretical findings through numerical simulations.
Our observations imply that the shot count required for QAOA to maintain a constant accuracy in cost function estimation typically decreases with increasing problem size, as illustrated in Fig.~\ref{fig:schematic_illustration}. In fact, as the number of edges $m$ approaches infinity ($m\rightarrow\infty$), one can envision achieving this with as few as one circuit execution. Furthermore, for graph sequences sharing similar local structures, our results imply that the number of iterations required to achieve the same relative performance, in terms of the approximation ratio, remains constant, regardless of problem size.
Combined with prior work showing that, for certain locally tree-like graph families, fixed-depth QAOA can achieve comparable approximation ratios as graph size grows \cite{brandao2018fixed, wurtz2021fixedangle, wybo2025missing}, our findings imply a unique resource scaling for QAOA applied to MaxCut: as the system grows, the only resource requiring expansion is the gate count per layer. The number of layers and iterations remain independent of the problem size, and the shot budget per estimation actually improves. 
These results further motivate the use of QAOA for large-scale graph problems.
The rest of the paper is organized as follows: Sec.~\ref{sec:background}  provides the essential background, including the QAOA formalization for the case of the MaxCut problem and the common practice for setting the number of shots in QAOA.
Sec.~\ref{sec:analytical_results} 
then begins with key observations on the relations between graph structure, circuit depth, and approximation ratios in QAOA, and lays the specific settings and assumptions, under which we analyze the algorithm's  complexity. Then
     our central results are presented in terms of two theorems, which show that: (a) larger graphs require less measurements to achieve the same relative error of the cost estimation; (b) larger graphs require the same amount of measurements  to achieve the same relative convergence to optimal solution with the same rate. Sec.~\ref{sec:exp_results} then presents numerical results that support our analytical findings. Finally, Sec.~\ref{sec:conclusions} concludes the paper.
    \begin{figure}[H]
        \centering       
        {\includegraphics[width=0.55\linewidth]{figs/Gemini_Generated_Illustration2.png}}
       \caption{
        \textbf{Schematic illustration of the main prediction developed in this paper.} Under the extensivity assumptions introduced below, larger graphs require fewer measurement shots to reach the same target relative-error within the QAOA MaxCut framework. Image generated using Gemini.}
    \label{fig:schematic_illustration}
    \end{figure}
\section{Background}\label{sec:background}
\subsection{Statistic notations and essentials}
\label{sec:statistic_background}
The following provides basic statistical notations and relations used in this work.
Given a discrete random variable $X$ with $k$ possible realizations $X_1,...X_k$, we denote its expectation value by $\mu_X \equiv \mathbb{E}[X] \equiv \langle X \rangle = \sum_{i=1}^k p_iX_i$, where $p_i$ is the probability to sample the $X_i$ realization, and denote its (population) variance by $\text{Var}(X) \equiv \sigma_X^2 \equiv  \mathbb{E}[(X-\mu_X)^2] = \sum_{i=1}^k p_i(X_i-\mu_X)^2 = \mathbb{E}[X^2]-\mathbb{E}[X]^2$, where $\sigma_X$ is denoted as the (population) standard deviation of $X$. 
Given $n$ sample realizations of $X$: $x_1,...,x_n$ we 
denote the observed sample mean by $\hat{x}_n=\frac{1}{n}\sum_{i=1}^n x_i$, and the sample variance by $s^2_{\hat{x}_n}\equiv\frac{1}{n-1}\sum_{i=1}^n \left(x_i-\hat{x}_n\right)^2$. 
For constant $a$ and $b$ it holds that $\text{Var}(aX+b) = a^2\text{Var}(X)$ and for independent random variables $X$ and $Y$ with finite variance we have $\text{Var}(X+Y) = \text{Var}(X) +\text{Var}(Y)$. Accordingly, the variance of the mean is given by $\text{Var}(\hat{x}_n)=\frac{1}{n}\text{Var}(X)$. 
The standard deviation of the mean, $\sigma_{\hat{x}_n}\equiv\sqrt{\text{Var}(\hat{x}_n)}$, often called \textit{the standard error of the mean} (SEM), estimates the deviation of the sample mean from the true expectation value. This leads to the relation 
\begin{equation}
\label{eq:standard_error}
    n = \frac{Var(X)}{\sigma_{\hat{x}_n}^2}, 
\end{equation}
which expresses the number of samples needed in order to estimate the sample mean to within at most one standard deviation of the mean  ($\sigma_{\hat{x}_n}$) away from the true expectation value, under a normal approximation this corresponds to a statistical guarantee of $\approx 68.2$ of the mass.
\paragraph{Janson's bound}
Janson's bound \cite{janson_2004} is a variant of the Hoeffding bound that enables bounds for dependent summands, which is the case in this study. Suppose that $X$ is a sum of $n$ bounded random variables $a_i \leq X_i \leq b_i$ denoted by $X=\sum\limits_{i=1}^nX_i$, the expected value of the sum is $\mathbb{E}\left[ X \right]=\mu_X$, and
the variables $X_i$ admit a dependency graph with maximum degree $\Lambda \geq 0$. That is, each $X_i$ may depend on at most $\Lambda$ other variables. Then for $t \gt 0$ it holds that  \cite{janson_2004}:
   \begin{equation}
        \label{eq:janson_bound}
       Pr\Biggl(\Bigm\lvert X-\mu_X\Bigm\lvert \ge t \Biggr) \le 2\exp \Biggl( -2\frac{t^2}{(\Lambda+1)\sum\limits_{i=1}^n(b_i-a_i)^2} \Biggr).  
   \end{equation}
Here, $\Lambda$ is not a variance; it is a combinatorial dependency parameter. The quantity $(\Lambda+1)\sum_i(b_i-a_i)^2$ plays the role of a concentration-width proxy, reducing to the usual Hoeffding denominator when the variables are independent.
In this paper, we use Janson's bound to estimate how many measurement samples are required to estimate the expectation value of the cost function to within a certain relative error $\delta$ under predefined desired statistical guarantees.
\paragraph{Hamiltonian expectation value and variance}
The expectation value of a Hamiltonian $\mathcal{H}$ depends on the system's state. 
For a parametrized circuit of $q$ qubits which prepares the pure state $\ket{\psi(\theta)}=\sum_{s=1}^{2^q}\alpha_s(\theta)\ket{s}$, where $\theta$ denotes the circuit parameters, one can define the random variable $\mathcal{H}_{\psi(\theta)}$, whose expectation value, $\mathbb{E}[\mathcal{H}_{\psi(\theta)}]$, is given by
\begin{equation*}
     \langle \mathcal{H}_{\psi(\theta)} \rangle \equiv 
    \langle \mathcal{H} \rangle_{\psi(\theta)} = \bra{\psi(\theta)}\mathcal{H}\ket{\psi(\theta)} = \sum_{s=1}^{2^q} \abs{\alpha_s(\theta)}^2\bra{s}\mathcal{H}\ket{s} = \sum_{s=1}^{2^q} p_\theta(s) \bra{s}\mathcal{H}\ket{s},
\end{equation*}
where $\ket{s}$ is the $s^{\text{th}}$ computational-basis state, and $p_\theta(s)$ is the probability of measuring the bit string $s$ in the state $\ket{\psi(\theta)}$. The variance of $\mathcal{H}_{\psi(\theta)}$ 
is defined accordingly as 
\begin{equation*}
    \text{Var}(\mathcal{H}_{\psi(\theta)}) \equiv 
    \text{Var}(\mathcal{H})_{\psi(\theta)} \equiv 
    \sigma_{H_{\psi(\theta)}}^2 \equiv \bra{\psi(\theta)}\mathcal{H}^2\ket{\psi(\theta)}- \bra{\psi(\theta)}\mathcal{H}\ket{\psi(\theta)}^2
\end{equation*}
\paragraph{Quantum statistic under a finite number of measurements}
In practice, the number of measurements $n$ is finite. In that case, we define the sample mean 
\begin{equation*}
\hat{\mathcal{H}}_{\psi(\theta)} = \frac{1}{n}\sum_{i=1}^n \bra{s_i}\mathcal{H}\ket{s_i}, 
\end{equation*}
whose variance represents the statistical noise of the sample mean, and is given by 
\begin{equation*}
Var(\hat{\mathcal{H}}_{\psi(\theta)}) = \frac{1}{n}Var(\mathcal{H}_{\psi(\theta)})
\end{equation*}
\paragraph{Partial derivative of the Hamiltonian's sample mean and its variance}
In QAOA, as in other VQAs, the tunable parameters are updated using classical optimization methods, most of which are gradient based. To that end, one seeks an estimator for the derivative of the exact expectation value with respect to a tunable parameter $\theta$, namely $\frac{\partial}{\partial \theta}\langle \mathcal{H}\rangle_{\psi(\theta)}$, constructed from finite shot sample means.
This can be approximated numerically in various ways. A simple choice is the central finite-difference approximation (second-order accurate) for the first derivative:
\begin{equation}
    \label{eq:finite_diff_hamiltonian}
    \frac{\partial \hat{\mathcal{H}}_{\psi(\theta)}}{\partial \theta}
    \approx
    \frac{\hat{\mathcal{H}}_{\psi(\theta+h)} - \hat{\mathcal{H}}_{\psi(\theta-h)}}{2h},
\end{equation}
and, more generally, a $(2k+1)$-point central finite-difference form
\begin{equation}
    \label{eq:k_point_fd}
    \frac{\partial \hat{\mathcal{H}}_{\psi(\theta)}}{\partial \theta}
    \approx
    \frac{1}{h}\sum_{j=-k}^{k} a_j \,\hat{\mathcal{H}}_{\psi(\theta + jh)},
\end{equation}
where $h$ is a finite discretization spacing and $a_j$ are the standard finite-difference coefficients for the first derivative.
A second option is the gate-wise parameter-shift rule~\cite{schuld2019evaluating} as implemented by PennyLane's \texttt{param\_shift} transform~\cite{pennylane,pennylane3}. 
PennyLane applies the shift rule at the level of trainable gate parameters: it generates shifted circuits for each such gate parameter, estimates the corresponding gate-level derivative, and then combines these derivatives. Thus, if a tunable parameter $\theta$ enters the circuit through gate parameters $\{\phi_a(\theta)\}_{a\in\mathcal A_\theta}$, then for the standard two-term shift rule of Pauli-rotation gates,
\begin{equation}
\label{eq:pennylane_gatewise_param_shift}
\frac{\partial  \hat{\mathcal H}^{\text{psr}}_{\psi(\theta)}}{\partial \theta}
=
\frac{1}{2}\sum_{a\in\mathcal A_\theta}
\frac{\partial \phi_a}{\partial \theta}
\left(\hat{\mathcal H}_{\psi_{a,+}(\theta)}
-
\hat{\mathcal H}_{\psi_{a,-}(\theta)}
\right).
\end{equation}
Here $\psi_{a,\pm}(\theta)$ denotes the circuit state obtained by shifting only the gate parameter $\phi_a$ by $\pm\pi/2$, while leaving all other gate parameters unchanged. In the QAOA MaxCut convention used here, the cost layer has $\phi_e(\gamma)=\gamma$ for each edge rotation, while the mixer layer has $\phi_i(\beta)=2\beta$ for each single-qubit $X$ rotation. Hence $\partial\phi_e/\partial\gamma=1$ and $\partial\phi_i/\partial\beta=2$.
Next, we examine the variance of the resulting derivative estimators.
For the central finite-difference estimator~\eqref{eq:finite_diff_hamiltonian} we obtain 
\begin{align}
    \label{eq:1_finite_diff_var}
    \mathrm{Var}\!\left(
        \frac{\partial \hat{\mathcal{H}}^{(1)}_{\psi(\theta)}}{\partial\theta}
    \right)
    &=
    \frac{1}{4h^{2}}
    \left(
        \mathrm{Var}\!\left(\hat{\mathcal{H}}_{\psi(\theta+h)}\right)
        +
        \mathrm{Var}\!\left(\hat{\mathcal{H}}_{\psi(\theta-h)}\right)
    \right) \\
    &=
    \frac{1}{4h^{2}n}
    \left(
        \mathrm{Var}\!\left(\mathcal{H}_{\psi(\theta+h)}\right)
        +
        \mathrm{Var}\!\left(\mathcal{H}_{\psi(\theta-h)}\right)
    \right).
    \nonumber
\end{align}
Similarly, for the general $(2k+1)$-point form~\eqref{eq:k_point_fd} we find
\begin{equation}
    \label{eq:k_finite_diff_var}
    \mathrm{Var}\!\left(
        \frac{\partial \hat{\mathcal{H}}^{(k)}_{\psi(\theta)}}{\partial\theta}
    \right)
    =
    \frac{1}{n h^{2}}
    \sum_{j=-k}^k a_j^{2}\,
        \mathrm{Var}\!\left(\mathcal{H}_{\psi(\theta+jh)}\right).
\end{equation}
For the parameter-shift estimator corresponding to \eqref{eq:pennylane_gatewise_param_shift}, we have
\begin{equation}
    \label{eq:param_shift_rule}
    \mathrm{Var}\!\left(
        \frac{\partial  \hat{\mathcal{H}}^{\mathrm{psr}}_{\psi(\theta)}}{\partial\theta}
    \right)
    =
    \frac{1}{4n}\sum_{a\in \mathcal A_\theta} \left(\frac{\partial \phi_a}{\partial \theta}\right)^2 \left(
 \mathrm{Var}\left({\mathcal{H}_{\psi_{a,+}(\theta)}}\right)
+
\mathrm{Var}\left({\mathcal{H}}_{\psi_{a,-}(\theta)}\right)\right).
\end{equation}
In all cases, the variance of the approximated partial derivative of the Hamiltonian scales linearly with the variance of the Hamiltonian up to a method-dependent prefactor, and can be bounded by
\begin{equation}
\label{eq:partial_derivative_variance}
\mathrm{Var}\!\left(\frac{\partial \hat{\mathcal{H}}_{\psi(\theta)}}{\partial\theta}\right)
\le
\frac{w_s}{n}\,
\max_{\theta \in \Theta}\mathrm{Var}\!\left(\mathcal{H}_{\psi(\theta)}\right),
\end{equation}
where $w_s$ is method dependent:
\begin{equation}  
\label{eq:ws_scaling_cases}
    w_s \;=\; 
\begin{dcases}
    \frac{1}{h^2}\sum_{j=-k}^{k}a_j^2 = O(h^{-2}),                            & \text{finite-difference}\\
        \frac{1}{2}\sum_{a \in \mathcal A_\theta} \left( \frac{\partial \phi_a}{\partial \theta}\right)^2 = \Theta(m),                            & \text{parameter-shift,}
\end{dcases}
\end{equation}
and where in the parameter-shift case, the scaling of $w_s$  with graph size corresponds to QAOA for MaxCut. We remark that using the \textit{general} parameter shift rule of Ref.~\cite{Wierichs_2022} would result in a quadratic scaling of $w_s$, i.e. $w_s^{\mathrm{gps}} =  \Omega(m^2)$, see Appendix \ref{app:parameter_shift_rule_results}.
\subsection{Optimization theory}
Next, we review basic definitions and results in optimization theory, see Ref.~\cite{wolf_mathematical} for more details. 
\paragraph{Gradient descent and stochastic gradient descent}
In gradient descent, the tunable parameters are updated according to
\begin{equation*}
\label{eq:gradient_descent}
\vec{\theta}^{\,t+1}=\vec{\theta}^{\,t}-\alpha \nabla F(\vec{\theta}^{\,t}),
\end{equation*}
where $\vec{\theta}^{\,t}$ denotes the parameters at iteration $t$, $\alpha$ is the learning rate, and $F$ is the cost function to be minimized.
In practice, for quantum algorithms $F(\vec{\theta})$ and its gradient are estimated from a finite number of measurement shots and possibly additional approximations such as finite-difference gradients. We therefore work with a random gradient estimator $\hat g^{\,t}\in\mathbb{R}^{d}$, with $d=2p$ in our parametrization, yielding the stochastic gradient descent (SGD) update rule
\begin{equation*}
\vec{\theta}^{\,t+1}=\vec{\theta}^{\,t}-\alpha \hat g^{\,t}.
\end{equation*}
Throughout, we allow $\hat g^{\,t}$ to be biased, and we define its bias vector as
\begin{equation*}
b^{t} \;\overset{\text{def}}{=}\; \mathbb{E}\!\left[\hat g^{\,t}\right]-\nabla F(\vec{\theta}^{\,t}).
\end{equation*}
When $\hat g^{\,t}$ is unbiased under an ideal parameter-shift estimation, one has $b^{\,t}=0$.
\\
\\
\paragraph{Mean-squared gradient error (MSE) and its decomposition}
A quantity that naturally appears in convergence analyses is the squared gradient error
\begin{equation}
\label{eq:mse_def}
\mathrm{MSE}^{\,t}
\;\overset{\text{def}}{=}\;
\mathbb{E}\!\left[\left\|\hat g^{\,t}-\nabla F(\vec{\theta}^{\,t})\right\|^2\right].
\end{equation}
Introduce the zero-mean fluctuation around the estimator mean,
\begin{equation*}
\varepsilon^{\,t} \;\overset{\text{def}}{=}\; \hat g^{\,t}-\mathbb{E}\!\left[\hat g^{\,t}\right],
\qquad \text{so that } \mathbb{E}[\varepsilon^{\,t}]=0.
\end{equation*}
Then
\begin{align}
\hat g^{\,t}-\nabla F(\vec{\theta}^{\,t})
&=\underbrace{\bigl(\hat g^{\,t}-\mathbb{E}[\hat g^{\,t}]\bigr)}_{\varepsilon^{\,t}}
+\underbrace{\bigl(\mathbb{E}[\hat g^{\,t}]-\nabla F(\vec{\theta}^{\,t})\bigr)}_{b^{\,t}}
=\varepsilon^{\,t}+b^{\,t}.
\end{align}
Expanding the squared norm and taking expectation gives
\begin{align}
\mathrm{MSE}^{\,t}
&=\mathbb{E}\!\left[\|\varepsilon^{\,t}+b^{\,t}\|^2\right] \nonumber\\
&=\mathbb{E}\!\left[\|\varepsilon^{\,t}\|^2\right]
+2\,\mathbb{E}\!\left[\langle \varepsilon^{\,t},b^{\,t}\rangle\right]
+\|b^{\,t}\|^2 \nonumber\\
&=\underbrace{\mathbb{E}\!\left[\|\hat g^{\,t}-\mathbb{E}[\hat g^{\,t}]\|^2\right]}_{\mathrm{Var}(\hat g^{\,t})}
+\|b^{\,t}\|^2,
\label{eq:mse_var_bias}
\end{align}
because $b^{\,t}$ is deterministic conditioned on $\vec{\theta}^{\,t}$ and
$\mathbb{E}\!\left[\langle \varepsilon^{\,t},b^{\,t}\rangle\right]
=\left\langle \mathbb{E}[\varepsilon^{\,t}],b^{\,t}\right\rangle=0$.
Equivalently, writing the estimator componentwise $\hat g^{\,t}=(\hat g^{\,t}_1,\ldots,\hat g^{\,t}_d)$, we obtain
\begin{align}
\mathrm{MSE}^{\,t}
&=\sum_{i=1}^{d}\mathbb{E}\!\left[\left(\hat g^{\,t}_i-\frac{\partial F(\vec{\theta}^{\,t})}{\partial \theta_i}\right)^2\right] \nonumber\\
&=\sum_{i=1}^{d}\Bigl(\underbrace{\mathrm{Var}(\hat g^{\,t}_i)}_{\mathbb{E}[(\hat g^{\,t}_i-\mathbb{E}[\hat g^{\,t}_i])^2]}
+\underbrace{\bigl(\mathbb{E}[\hat g^{\,t}_i]-\tfrac{\partial F(\vec{\theta}^{\,t})}{\partial \theta_i}\bigr)^2}_{{b^t_i}^2}\Bigr).
\label{eq:mse_componentwise}
\end{align}
If $\hat g^{\,t}$ is unbiased, \eqref{eq:mse_var_bias} reduces to $\mathrm{MSE}^{\,t}=\mathrm{Var}(\hat g^{\,t})$.
\\
\\
\paragraph{Connection to finite differences}
If the gradient is approximated by a $k$th-order finite-difference rule with step size $h$, then the truncation error per component scales as $O(h^{k})$, i.e.,
\begin{equation}
\label{eq:fd_bias_scaling}
\mathbb{E}[\hat g^{\,t}_i]-\frac{\partial F(\vec{\theta}^{\,t})}{\partial \theta_i} = O(h^{k}),
\end{equation}
so the squared bias term in \eqref{eq:mse_componentwise} scales as $O(h^{2k})$. Consequently,
\begin{equation}
\label{eq:mse_bound_fd}
\mathbb{E}\!\left[\left\|\hat g^{\,t}-\nabla F(\vec{\theta}^{\,t})\right\|^2\right]
=
\sum_{i=1}^{d}\mathrm{Var}(\hat g^{\,t}_i)
+\sum_{i=1}^{d}O(h^{2k}),
\end{equation}
which is the form used in our analytical bounds (with $d=2p$).
\paragraph{$L$-Lipschitz continuous gradient function}
 A function $F:\mathbb{R}^n \rightarrow \mathbb{R}$ has a $L$-Lipschitz continuous gradient on $\mathbb{R}^n$ if there exists a constant $L > 0$ such that
        \begin{equation}
        \label{eq:lipschitz_continous}
             \|\nabla F(\vec{\theta}_{1})- \nabla F(\vec{\theta}_2)\| \le L\|\vec{\theta}_1-\vec{\theta_2}\|, \, \,\forall \theta_1,\theta_2 \in \mathbb{R}^n. 
        \end{equation}
The Lipschitz constant $L$ essentially quantifies the smoothness of the function $F(\vec{\theta})$.         
Note that if a function $F(\vec{\theta})$ has an $L$-Lipschitz continuous gradient, it does not necessarily imply that it is convex. 
\paragraph{The Polyak-Łojasiewicz (PL) inequality}
A cost function $F(\vec{\theta}):\mathbb{R}^n \rightarrow \mathbb{R}$ satisfies the Polyak-Łojasiewicz (PL) inequality for some $\mu > 0$ if for all $\vec{\theta} \in \mathbb{R}^n$ it holds that
        \begin{equation}\label{eq:PL_inequality}
            \frac{1}{2} \|\nabla F(\vec{\theta})\|_2^2  = \frac{1}{2}\sum_{i=1}^{d} \left( \frac{\partial F}{\partial \theta_i} \right)^2\geq \mu \big(F(\vec{\theta}) - F(\vec{\theta}^*)\big),
        \end{equation}
where $\vec{\theta}^*$ represent the optimal parameters, for which the cost function attains its global minimum. 
Satisfying the PL inequality ensures that, whenever the gradient vanishes, the cost is at its global minimum.
Note that if a cost function $F$ with an $L\text{-Lipschitz}$ continuous gradient satisfies the PL inequality, then $F$ is invex, a weaker but more general condition than convexity, which is sufficient to guarantee global convergence \cite{wolf_mathematical}. Since our problem is neither convex nor globally invex, we instead assume local invexity, which ensures local convergence within the region of optimization.
    \paragraph{Gradient descent convergence for $L$-Lipschitz gradient cost functions that satisfy the PL-inequality}
    Let $F(\vec\theta):\mathbb{R}^d \rightarrow \mathbb{R}$ be a cost-function with an $L$-Lipschitz gradient that satisfies the PL-inequality for some $\mu \gt 0$. Then, under the gradient-descent optimization rule of Eq.~\eqref{eq:gradient_descent}
    and setting the learning rate to $\alpha=\frac{1}{L}$, it holds that
    \cite{wolf_mathematical}:
    \begin{equation}
            \forall t \in \mathbb{N} \quad F(\vec\theta^{t})-F(\vec\theta^*) \le \big( 1 - \frac{\mu}{L} \big)^{t}\Big(F(\vec\theta^0)-F(\vec\theta^*)\Big),
            \label{eq:gradient_descent_convergence}
        \end{equation}
    where $\vec\theta^{t}$ are the tunable parameters at iteration (time step) $t$ and $\vec\theta^*$ represent the optimal parameters, as before. 
        Rewriting \eqref{eq:gradient_descent_convergence} with relative form gives
        \begin{equation}
            d_t\abs{F(\vec\theta^*)} \le \big( 1 - \frac{\mu}{L} \big)^{t}d_0\abs{F(\vec\theta^*)}, 
            \label{eq:relative_gradient_descent_convergence}
        \end{equation}
        where 
        \begin{equation}
            \label{eq:d_t_relative_cost_gap}       
            d_t:=\frac{\mathbb{E}[F(\vec{\theta}^{\,t})]-F(\vec{\theta}^*)}{\abs{F(\vec{\theta}^*)}}
        \end{equation}      
        is the relative cost gap from the global optimum at iteration $t$.  
\subsection{Graph theory}
\paragraph{\BS convergence theorem}
\label{subsec:benjamini_schramm_convergence}
Let $G$ be a graph and let $u$ be a vertex in $G$. 
We denote by $B_p(u; G)$ the neighborhood of radius $p$ centered at $u$, consisting of all vertices at distance at most $p$ from $u$. We refer to $u$ as the \textit{root vertex}. 
The \BS convergence theorem describes the limiting behavior of a sequence of graphs by focusing on their local neighborhood structure at a fixed radius.
Formally, let $(G_N)_{N\ge 1}$ be a sequence of finite graphs with $\lvert V(G_N)\rvert\to\infty$. The standard vertex rooted version of \BS convergence states that for a uniformly random vertex $u_N \in G_N$ and every fixed radius $p$, the distribution of the rooted neighborhood $B_p(u_N;G_N)$ converges as $N\to\infty$. More explicitly, for every finite rooted neighborhood $h$ of radius $p$, the limit 
\[
\lim_{N\to\infty}\Pr\bigl(B_p(u_N;G_N)\cong h\bigr)
\]
exists \cite{benjamini_schramm_2001}. 
In this paper, we use an \emph{edge-rooted} variant, obtained by sampling a uniformly random edge $e_N$ and considering the radius-$p$ neighborhood around that rooted edge.
While in the most general case it is not known if a sequence of scaled graphs satisfies Benjamini–Schramm convergence, there are several known graph families for any this convergence holds. Trivial examples include cycle and grid graphs. More complex families include: (a) Random $v$-regular graphs are Benjamini-Schramm convergent, with local neighborhoods converging to those of the infinite $v$-regular tree  \cite{rahman2015localconvergence}; and 
(b) Sparse Erd\H{o}s--R\'enyi graphs in the regime $G(N,v/N)$, corresponding to $N$-nodes graphs with edge probability $v/N$, where $v$ is a constant, are Benjamini-Schramm convergent, with local limit given by a Poisson($v$) Galton--Watson tree \cite{vanderhofstadNotesRGCN}.
\subsection{QAOA for the MaxCut problem}\label{secsec:qaoa}
\paragraph{The QAOA cost function}
        The QAOA cost-function is phrased as the expectation value of a problem Hamiltonian $C$, defined per problem. 
        For a $p$-layer QAOA circuit of $q$ qubits the cost function $F_p (\boldsymbol{\gamma,\beta})$ is given by:
        \begin{equation}
        \label{eq:qaoa_expectaion} 
         F_p (\boldsymbol{\gamma,\beta})\equiv \langle  C\rangle_{\psi_p(\boldsymbol{\gamma,\beta})}  =  \bra{\psi_p(\boldsymbol{\gamma,\beta})}C\ket{\psi_p(\boldsymbol{\gamma,\beta})}, 
         \end{equation}
         with 
         \begin{equation}
         \label{eq:psi_gamma_beta}
         \ket{\psi_p(\boldsymbol{\gamma,\beta})} = U (B, \beta_p) \cdots U (C, \gamma_1) \ket{+}^{\otimes q}, 
         \end{equation}
        where $U(B, \beta_p) = e^{-i\beta_p B}$ and $U(C, \gamma_p) = e^{-i\gamma_p C}$, with $B = \sum_{i=1}^qX_i$ being the so called mixer Hamiltonian, and $\ket{+}^{\otimes q}=\frac{1}{\sqrt{2^q}}\sum_{j=1}^{2^q}\ket{j}$ is the uniform superposition state, which is an eigenstate of the mixer Hamiltonian. 
        \paragraph{QAOA formulation for the MaxCut problem}
        The MaxCut problem aims at finding an optimal partitioning of a graph $G=(V,E)$ to two complementary sets, such that the number of edges that connect nodes from the two sets, is maximized. In this case, the number of qubits is given by the number of nodes in the graph, i.e.\ $q=N$, and the problem Hamiltonian 
        
        is given by 
        \begin{equation}
        \label{eq:qaoa_problem_hamiltonian}
            C = -\!\!\!\!\sum_{\left\langle j,k \right\rangle \in E} \!\!\! C_{jk} = -\!\!\!\!\sum_{\left\langle j,k \right\rangle \in E}\frac{1}{2}\left(1-Z_jZ_k\right),
        \end{equation}
        where $C_{jk} = \frac{1}{2}\left(1-Z_jZ_k\right)$, and  $Z_k$ is the $Z$-Pauli operator applied on qubit $k$. 
        With this formulation, the groundstate of the  
        problem Hamiltonian serves as the desired solution. 
        In this case, the cost function is given by
        \begin{equation}
            F_p (\boldsymbol{\gamma,\beta})=-\!\!\!\!\sum_{\left\langle j,k \right\rangle \in E} \!\!\!\bra{\psi_p(\boldsymbol{\gamma,\beta})}C_{jk}\ket{\psi_p(\boldsymbol{\gamma,\beta})} = -\!\!\!\!\sum_{\left\langle j,k \right\rangle \in E} \!\!\!f_{p,jk}(\boldsymbol{\gamma,\beta}),
\label{eq:qaoa_expectaion_edges_sum} 
        \end{equation}  where each term 
$f_{p,jk}(\boldsymbol{\gamma,\beta})\in [0,1]$ denotes the local expectation value for edge 
$(j,k)$. The overall cost function is thus obtained by summing these local contributions across all edges in the graph.
        The approximation ratio, which expresses the relative success of the solution, is given by   
        \begin{equation}
            \label{eq:approximation_ratio}
            r_p=\frac{\abs{F_p(\boldsymbol{\gamma}, \boldsymbol{\beta})}}{C_{max}},
        \end{equation}
        where $C_{max}$ denotes the maximum cut of the graph.
        Throughout the paper, we mark the optimal parameters by $(\boldsymbol{\gamma}^*, \boldsymbol{\beta}^*)$, so that $F_p(\boldsymbol{\gamma}^*, \boldsymbol{\beta}^*)$ is the optimal solution attainable with a $p$-depth circuit, and the corresponding optimal $p$-depth approximation ratio is
        \begin{equation}
            \label{eq:opt_approximation_ratio}
            r_p^*=\frac{\abs{F_p(\boldsymbol{\gamma}^*, \boldsymbol{\beta}^*)}}{C_{max}}.
        \end{equation}
\paragraph{The QAOA Reverse Causal Cone (RCC)}
It is known that in a $p$-layer QAOA MaxCut circuit, the local expectation value of each edge $f_{p,jk}(\boldsymbol{\gamma}, \boldsymbol{\beta})$ (see Eq.~\eqref{eq:qaoa_expectaion_edges_sum}) depends only on the local topology of the graph. Specifically, it is determined by the so-called QAOA "RCC" - the subgraph consisting of all nodes and edges within a distance of 
$p$ edges from the edge $\langle j,k\rangle$  \cite{wang_quantum_2018, sureshbabu_parameter_2024, farhi2020quantumapproximateoptimizationalgorithm}. This dependence can be expressed as:
\begin{eqnarray}
        f_{p,jk}\, (\boldsymbol{\gamma}, \boldsymbol{\beta}) = && \bra{+}^{\otimes N} U^\dagger(C_{g_p(j,k)}, \gamma_1) \cdots U^\dagger (B_{g_p(j, k)}, \beta_p) \nnm
        && C_{\langle j k\rangle} U(B_{g_p(j,k)}, \beta_p) \cdots U(C_{g_p(j, k)}, \gamma_1) \ket{+}^{\otimes N},
\label{eq:qaoa_subgraph_cost}
\end{eqnarray}
where $g_p(j,k)$ denotes the $p$-radius sub-graph centered on edge $\langle j,k\rangle$. 
Hence, if two edges $\left\langle j k \right\rangle$ and $\left\langle j^\prime k^\prime \right\rangle$ give rise to isomorphic sub-graphs, then for a given choice of tunable parameters $(\boldsymbol{\gamma}, \boldsymbol{\beta})$, the corresponding edge evaluations are equal, i.e.\
        \begin{equation}
            f_{p,jk}(\boldsymbol{\gamma}, \boldsymbol{\beta})=f_{p,j^\prime k^\prime}(\boldsymbol{\gamma}, \boldsymbol{\beta}) \quad \text{if} \quad  g_p(j, k) = g_p(j^\prime, k^\prime). 
            \label{eq:qaoa_subgraph_cost2}
        \end{equation}
        We can thus view the sum in Eq.~\eqref{eq:qaoa_expectaion_edges_sum} as a sum over evaluations $f_{p,g}(\boldsymbol{\gamma}, \boldsymbol{\beta})$ of distinct sub-graph types $g$, each weighted by its corresponding multiplicity $w_g$:
        \begin{equation}
            F_p \, (\boldsymbol{\gamma}, \boldsymbol{\beta}) = -\!\sum\limits_g \, w_g\, f_{p,g} (\boldsymbol{\gamma}, \boldsymbol{\beta})
        \label{eq:qaoa_sum_over_subgraphs},
        \end{equation}
        with 
        \begin{equation}
            \label{eq:sub_over_graph_types_multiplicity}
           \sum\limits_g \, w_g = m
        \end{equation}
        and where
        \begin{equation*}
            f_{p,g}\, (\boldsymbol{\gamma}, \boldsymbol{\beta}) = f_{p,jk}\, (\boldsymbol{\gamma}, \boldsymbol{\beta}) \quad \forall \langle j,k\rangle \quad s.t. \quad g_p(j,k)=g.   
        \end{equation*}
       \paragraph{Statistical independence of edge costs from disjoint  causal cones}
       It was noted in \cite{farhi2014quantum} that for any two costs $C_{jk},C_{lm}$ corresponding to edges that belong to disjoint causal cones, i.e. their causal cones do not share any nodes, the costs are statistically uncorrelated, namely
        \begin{equation*}           \bra{\psi(\beta,\gamma)}C_{jk}C_{lm}\ket{\psi(\beta,\gamma)} = \bra{\psi(\beta,\gamma)}C_{jk}\ket{\psi(\beta,\gamma)}
            \bra{\psi(\beta,\gamma)}C_{lm}\ket{\psi(\beta,\gamma)}
        \end{equation*}
        where $\ket{\psi(\gamma,\beta)}$ is defined as in Eq.~\eqref{eq:psi_gamma_beta}. 
        This is due to the commutation of the local unitaries, namely
        \begin{eqnarray*}
             &&[e^{i\beta X_j},Z_k]=0 \quad \forall k\neq j \\  \nonumber
             &&[e^{i\gamma Z_jZ_k},Z_l]=0 \quad \forall l\neq j,k \\ \nonumber
             &&[e^{i\gamma Z_jZ_k},e^{i\gamma Z_lZ_m}]=0 \quad \forall j,k,l,m         
        \end{eqnarray*}
        The initial state is a product, $\ket+^{\otimes N}=\ket+_{L_{jk}}\otimes \ket+_{L_{lm}}\otimes \ket+_{\text{rest}}$, resulting in
        \begin{eqnarray}       
        \label{eq:uncorrelated_distinct_cost}
        && \bra{\psi(\beta,\gamma)}C_{jk}C_{lm}\ket{\psi(\beta,\gamma)} = \nonumber \\ 
        && \quad \bra{+}^{\otimes N} U_{L_{jk}} \otimes U_{L_{lm}} (Z_j \otimes Z_k \otimes  Z_l \otimes Z_m) U^{\dagger}_{L_{jk}} \otimes U^{\dagger}_{L_{lm}} \ket{+}^{\otimes N} = \\ 
        && \quad \bra{+}^{\otimes N} \left (U_{L_{jk}} (Z_j \otimes Z_k) U^{\dagger}_{L_{jk}}\right) \otimes \left(U_{L_{lm}} (Z_l \otimes Z_m) U^{\dagger}_{L_{lm}} \right)\ket{+}^{\otimes N} \nonumber
       \end{eqnarray}  
        where $U_{L_{jk}}$ involves only local operators applied on qubits that correspond to nodes in the causal cone of edge $j-k$. This implies that $U_{L_{jk}}$ and $U_{L_{lm}}$ operate on two disjoint subspaces, hence their multiplication is a tensor product. In the simple case of $p=1$ it reduces to 
        \begin{equation*}
            U_{L_{jk}}=e^{i\gamma_1(\sum_{l,m\in E_{j,k}} Z_lZ_m)} e^{i\beta_1 (X_j+X_k)}, 
        \end{equation*}
        where $\sum_{l,m \in E_{j,k}}$ represents summing over all  nodes connected either to node $j$ or $k$. For $p>1$ more qubits are involved, but still only those that belong to the j-k edge's causal cone.
        The derivation in Eq.~\eqref{eq:uncorrelated_distinct_cost} shows that not only that for edges belonging to disjoint causal cones the costs $C_{jk}$ and $C_{lm}$ are uncorrelated, but also that their measurement outcomes are statistically independent, since 
        \begin{eqnarray}
            \label{eq:qaoa_statistical_independence}
            p(a_jb_k,c_ld_m) &=& \bra{\psi(\beta,\gamma)}a_jb_k,c_ld_m\rangle \langle a_jb_k,c_ld_m\ket{\psi(\beta,\gamma)} \\ \nonumber
            &=& \bra{+}^{\otimes N} U_{L_{jk}} \otimes U_{L_{lm}} \ket{a_jb_k,c_ld_m}\bra{a_jb_k,c_ld_m} U^{\dagger}_{L_{jk}} \otimes U^{\dagger}_{L_{lm}} \ket{+}^{\otimes N} \\ \nonumber
            &=& \bra{+}_{L_{jk}} U_{L_{jk}} \ket{a_jb_k}\bra{a_jb_k}U^{\dagger}_{L_{jk}}\ket{+}_{L_{jk}}
            \bra{+}_{L_{lm}} U_{L_{lm}} \ket{c_ld_m}\bra{c_ld_m}U^{\dagger}_{L_{lm}}\ket{+}_{L_{lm}}  \\ \nonumber
            &=& p(a_jb_k)p(c_ld_m), 
        \end{eqnarray}
        where $p(a_jb_k)$ is the probability to get the outcome $a\in\{-1,1\}$ for qubit $j$ and $b\in\{-1,1\}$ for qubit $k$.
    \vspace{2mm}
       \paragraph{The variance of the QAOA cost operator of the MaxCut problem}
         For a $p$-layer QAOA circuit and for a graph with maximum degree $v$, Ref.~\cite{farhi2014quantum} derived an upper bound on the variance of the cost operator:  
        \begin{equation}
        \text{Var}(C)_{\psi_p(\boldsymbol{\gamma}, \boldsymbol{\beta})} \leqslant 2  \rvpsym \cdot m,
        \label{eq:variance_upper_bound}
        \end{equation}
        where $\rvpsym$ is the maximal local variance contribution per edge, which depends solely on $v$ and $p$ as
        \begin{equation}
        \label{eq:h_vp_definition}
             \rvpsym = 
                \begin{dcases}
                    2p+2,   & \text{if } v=2\\
                    \frac{(v-1)^{2p+2}-1}{v-2},              & \text{if } v\ge3.
                \end{dcases}
        \end{equation}
        This behavior can be understood as a consequence of locality. The total cost is a sum of $m$ local edge terms, and for fixed depth $p$ on bounded-degree graph families each term depends only on a finite causal cone. Therefore, each term can overlap with only $O(v^p)$ other terms, while the vast majority of edge pairs do not interact directly. This weak local dependence leads to extensive variance. 
\subsection{The number of shots in QAOA - the common practice}
Each measurement of the QAOA circuit results with a single computational state $\ket{s}$, i.e.\ a bit-string of length $q$, which corresponds to a cost value of $\bra{s}C\ket{s}$. 
We perform $n$ measurements, regarding each measurement as  independent sample of a random variable, calculate the sample mean $\hat{C}=\frac{1}{n}\sum_{i=1}^n \bra{s_i}C\ket{s_i}$ and ask: how many measurements  $n$ are required for $\hat{C}$ to accurately estimate the true expectation value $\langle C \rangle$?
In QAOA, as in any VQA whose cost function is phrased as the expectation value of a given Hamiltonian $C$, a common practice, see e.g. \cite{TILLY20221, zhu2023optimizing, sanders2020compilation}, is to set the number of shots $n$ to
    \begin{equation}
    \label{eq:number_of_shots_common_practice}
        n = \frac{\sigma_C^2}{\Delta^2}\equiv\frac{Var(C)}{\Delta^2},
    \end{equation}
    where $\sigma_C^2$ is the variance of the cost operator $C$, and $\Delta$ is the required accuracy of the solution, i.e.\ the required accuracy of the cost function $\langle C \rangle$ (since the variance $\sigma_C^2$ is unknown, it is typically replaced by the sample variance $\mathbf{s_C}^2$).
    This practice is based on a simple statistical heuristic, namely, to equate the \textit{standard error of the mean} (SEM), see Eq.~\ref{eq:standard_error}, with the desired error itself. The method then operates within a fixed probability guarantee (of one standard deviation). To get more flexible and tighter bounds, other statistical tools like measure concentrations, can be used, see e.g.\ Ref.~\cite{kahani2023novel} which offers a shot-allocation scheme based on the Hoeffding inequality. 
\section{Analytical results}\label{sec:analytical_results}
In this section, we present two results on the scaling of the number of shots in QAOA for the MaxCut problem. The first result (\ref{sec:analytical_result_a}) bounds the number of shots required to estimate the cost expectation. More specifically,  it answers the following question: given a graph with $m$ edges and a $p$-depth QAOA circuit, how many shots are needed to estimate the cost up to a fixed relative error?  This analysis accounts for the graph size and structure, but is agnostic to the optimization process. Intuitively, it accounts for a snapshot of the iterative procedure. 
In contrast, the second result (\ref{sec:analytical_result_b}) dives into the optimization process and provides an answer to the following question: given a graph with $m$ edges and a $p$-depth QAOA circuit, how many iterations and shots per iteration (a fixed number of shots per iteration is assumed) are needed to allow the iterative process to converge the cost function up to some relative cost gap from the optimal attainable value $F_p(\boldsymbol{\gamma}^*, \boldsymbol{\beta}^*)$ (see Eq.~\eqref{eq:opt_approximation_ratio}).
\subsection{Result A:  shot-number scaling for estimating the QAOA cost expectation} \label{sec:analytical_result_a}
In what follows, we show that for a fixed $p$-layer QAOA circuit executed on a $v$-maximal degree graph, the number of shots $n_p$ that is sufficient to estimate the cost expectation, while guaranteeing a relative error of at most $\delta > 0$ with an 
error probability $\epsilon \in (0,1)$, is lower bounded by:
\begin{equation}
    n_p \geq \frac{\eta m}{\delta^2\costsquare}, 
    \label{eq:res_a_shot_number_scale}
\end{equation}
where $\eta$ scales linearly with $\rvpsym$ (defined in Eq.~\eqref{eq:h_vp_definition}) and $\ln\left(\frac{2}{\epsilon}\right)$.  
This shows that in the most general setting, the number of shots $n_p$  that is sufficient to promise that the estimation $\hat{C_p}$ deviates by less than $\delta\langle C_p \rangle$   from its true expectation  $\langle C_p \rangle $ with probability $1-\epsilon$, increases exponentially with the depth of the circuit $p$, polynomially with the degree of the graph $v$, and linearly with the number of edges $m$, while it decays quadratically with the desired relative error $\delta$, quadratically with the cost expectation, and logarithmically with the error probability $\epsilon$.
Then, by making the reasonable assumption that at any step along the iterative process the cost expectation $\langle C_p\rangle$ is an extensive property of $m$ with some constant $\kappa$, see formal statement in assumption \ref{ass:qaoa_cost_lower_bound} below, including reasoning for its plausibility, we get that 
\begin{equation}
    n_p \ge \frac{\eta}{\delta^2\kappa^2m} = \Omega \left(\frac{1}{m}\right),  
    \label{eq:inverse_relation_shots_m}
\end{equation}
which implies that as long as the graph scaling is not pathological and satisfies our assumption~\ref{ass:qaoa_cost_lower_bound}, the lower bound on the number of required shots $n_p$ scales \textit{inversely} with the number of edges $m$.
Notably, for $v$-regular graphs, where the number of edges scales linearly with the number of nodes ($m = \frac{Nv}{2}$), the required shots also scale inversely with the node count, yielding $n_p = \Omega(1/N)$.
\begin{assumption}[Extensive lower bound on the QAOA cost]
    \label{ass:qaoa_cost_lower_bound}
    Let $G$ be a graph with $m$ edges, and let $\vec{\theta}\in\mathbb{R}^{2p}$ denote the QAOA tunable parameters. Then there exists a constant $\kappa>0$, independent of $m$, such that
\begin{equation}
    \label{eq:C_lower_bound}
\costexpect \equiv \big\lvert \FpThetaG \big\rvert \;\ge\; \kappa\, m .
\end{equation}
    
    \noindent
    Reasoning for Assumption \ref{ass:qaoa_cost_lower_bound}: in Appendix~\ref{app:cost_function_scaling}.
\end{assumption}
Next, we provide two derivations for result A, as stated in Eq.~\eqref{eq:res_a_shot_number_scale}: first, in Observation~\ref{thm:shots_based_on_common_practice} using the common practice of Eq.~\eqref{eq:number_of_shots_common_practice}; and second, in Thm.~\ref{thm:shots_per_iteration} based on concentration measures in the form of  Janson's bound from Eq.~\eqref{eq:janson_bound}.
The latter yields the same scaling with a tighter bound, and allows the error probability to be set more flexibly.
\subsubsection{Using the common practice heuristic}
\begin{observation}
\label{thm:shots_based_on_common_practice}
Let $\langle C_p \rangle=F_p(\vec\theta^t)$ be the expectation value of the QAOA cost operator with fixed $p$-depth circuit parameters $\vec\theta \in \mathbb{R}^{2p}$, and let $\delta \gt 0$ be a relative error. Then, within the common practice heuristic, a sufficient number of shots $n_p$ to estimate $\langle C_p \rangle$ to within error $\Delta = \delta \costexpect$ is given by
\begin{equation}
    n_p \geq \frac{\eta_{vp} m}{\delta^2 \costsquare}.
    \label{eq:n_common_practice_bound_non_tight}
\end{equation}
where $\eta_{vp} = 2\rvpsym$ with $\rvpsym$ defined in Eq.~\eqref{eq:h_vp_definition}.
\end{observation}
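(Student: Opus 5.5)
The plan is to plug the variance bound of Eq.~\eqref{eq:variance_upper_bound} into the common-practice rule of Eq.~\eqref{eq:number_of_shots_common_practice}, using the relative accuracy $\Delta=\delta\costexpect$.

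First, I will recall that within the heuristic, the sample mean $\hat C_p$ estimates $\langle C_p\rangle$ to within one standard error. By Eq.~\eqref{eq:standard_error}, the standard error is $\sigma_{\hat C_p}=\sqrt{\mathrm{Var}(C)_{\psi_p(\vec\theta)}/n_p}$. The accuracy requirement is therefore $\sigma_{\hat C_p}\le\Delta$, which is equivalent to
\begin{equation*}
n_p\ \ge\ \frac{\mathrm{Var}(C)_{\psi_p(\vec\theta)}}{\Delta^2}.
\end{equation*}
Since a larger $n_p$ only shrinks the standard error, this also shows that any $n_p$ above the threshold is sufficient, not merely equality as written in Eq.~\eqref{eq:number_of_shots_common_practice}.

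Second, I will replace the unknown variance by the state-independent bound of Eq.~\eqref{eq:variance_upper_bound}, $\mathrm{Var}(C)_{\psi_p}\le 2\rvpsym m$. This bound holds for any parameters $\vec\theta$ and any graph of maximum degree $v$, with $\rvpsym$ given by Eq.~\eqref{eq:h_vp_definition}. Any $n_p\ge 2\rvpsym m/\Delta^2$ then satisfies $n_p\ge \mathrm{Var}(C)/\Delta^2$. Substituting $\Delta^2=\delta^2\costsquare$ (note $\costexpect^2=\costsquare$) and setting $\eta_{vp}=2\rvpsym$ gives exactly Eq.~\eqref{eq:n_common_practice_bound_non_tight}.

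The only delicate point is the non-degeneracy of $\Delta$: we need $\langle C_p\rangle\neq 0$, otherwise the bound is vacuous. I would note that Assumption~\ref{ass:qaoa_cost_lower_bound} guarantees $\costexpect\ge\kappa m>0$, although the observation itself only needs $\langle C_p\rangle\neq0$. The guarantee obtained is heuristic, at the one-standard-error level, which is why it is stated as an observation and not a theorem. Beyond this, the argument is a direct substitution and presents no real obstacle.
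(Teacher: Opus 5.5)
Your proposal is correct and matches the paper's derivation: substitute the variance bound $\mathrm{Var}(C)\le 2c_{vp}m$ and $\Delta=\delta\,\lvert\langle C_p\rangle\rvert$ into the common-practice rule $n=\mathrm{Var}(C)/\Delta^2$, which gives exactly the stated threshold with $\eta_{vp}=2c_{vp}$. Your two side remarks are harmless refinements that the paper leaves implicit: that any larger $n_p$ also suffices, and that one needs $\langle C_p\rangle\neq 0$.
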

\begin{proof}[Derivation]
Taking the common practice of Eq.~\eqref{eq:number_of_shots_common_practice}, plugging in the variance upper bound from Eq.~\eqref{eq:variance_upper_bound} and the relative error $\Delta = \delta \costexpect$ we get a lower bound on the number of shots $n_p$ 
    \begin{equation}
        \frac{\text{Var} \left( C_p \right)}{\Delta^2}\le \frac{2\rvpsym m}{\delta^2 \costsquare} \leq n_p.
        \label{eq:n_common_practice_bound} 
    \end{equation}
\end{proof}
    \noindent
    Further applying our assumption~\ref{ass:qaoa_cost_lower_bound} that the cost scales linearly with the number of edges (Eq.~\eqref{eq:C_lower_bound}), leads to
    \begin{equation}
        n_p \geq \frac{2\rvpsym}{\delta^2 \kappa^2 m}.    
    \end{equation}
\subsubsection{Using measure concentration tools}
Next, we employ measure concentration tools, in the form of Janson's bound from Eq.~\eqref{eq:janson_bound}, to achieve tighter and more general bounds on the number of required shots. 
\begin{theorem}
        \label{thm:shots_per_iteration}
        Let $\langle C_p \rangle=F_p(\vec\theta^t)$ be the expectation value of the QAOA cost operator with fixed $p$-depth circuit parameters $\vec\theta \in \mathbb{R}^{2p}$, and $\delta \gt 0$ is a desired relative error.
        In addition, let $\epsilon\in (0,1)$ be an error probability. 
        Then, the number of shots $n_p$ required to guarantee
        \begin{equation*}
          \Pr \left(\Bigm \lvert \hat{C}_p  - \langle C_p \rangle\Bigm\lvert \gt \delta \costexpect\right) \le \epsilon,  
        \end{equation*}
        is
        \begin{equation}
             n_p \ge \frac{\eta_{vp\epsilon}m}{\delta^2\costsquare},
             \label{eq:n_variance_bound_non_tight}
        \end{equation}  
        where 
        $\eta_{vp\epsilon} = (\rvpsym + \frac{1}{2})
        \ln(\frac{2}{\epsilon})$
        with $v$ denoting the maximal degree, and $\rvpsym$ is defined in Eq.~\eqref{eq:h_vp_definition}. 
\end{theorem}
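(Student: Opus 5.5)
We will apply Janson's bound, Eq.~\eqref{eq:janson_bound}, to the empirical sum of edge outcomes over all $n_p$ shots. Each shot $i$ returns a bit string $s_i$, and for every edge $\langle j,k\rangle$ we record the local outcome $X_{i,jk}=\bra{s_i}C_{jk}\ket{s_i}\in\{0,1\}$. Then
\[
n_p\hat{C}_p=-\sum_{i=1}^{n_p}\sum_{\langle j,k\rangle\in E}X_{i,jk}
\]
is a sum of $n_p m$ bounded variables with $b-a=1$, and its mean is $n_p\langle C_p\rangle$. The event $\lvert \hat{C}_p-\langle C_p\rangle\rvert>\delta\costexpect$ is the same as a deviation of this sum by more than $t=n_p\delta\costexpect$.

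\textbf{Dependency graph.} Different shots are independent repetitions of the circuit, so variables with different $i$ are independent. Within one shot, Eq.~\eqref{eq:qaoa_statistical_independence} shows that edge outcomes whose $p$-radius causal cones are node-disjoint are statistically independent. Hence the dependency graph is a disjoint union of $n_p$ copies of the ``overlapping causal cone'' graph on $E$.

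Its maximum degree $\Lambda$ is bounded by counting the edges whose causal cone meets that of a fixed edge in a graph of maximal degree $v$. This is the same locality count that underlies the variance bound Eq.~\eqref{eq:variance_upper_bound} of Ref.~\cite{farhi2014quantum}. We want the count in the form $\Lambda+1\le 2\rvpsym+1$, with $\rvpsym$ as in Eq.~\eqref{eq:h_vp_definition}.
\begin{itemize}
\item For $v\ge3$, the geometric sum $\sum_{r}(v-1)^r$ over the radius $2p+1$ ball around an edge, with the factor two from the two endpoints, gives $((v-1)^{2p+2}-1)/(v-2)$.
\item For $v=2$ (paths and cycles), the count is linear, $2p+2$ on each side.
\end{itemize}

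\textbf{Obstacle.} This combinatorial bound is the main step to get right. The exact constant $\rvpsym+\tfrac12$ in $\eta_{vp\epsilon}$ depends on matching $\Lambda+1$ to $2\rvpsym+1$. If the raw neighbour count comes out slightly different, I would reuse the edge-pair counting from Ref.~\cite{farhi2014quantum} directly rather than recount.

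\textbf{Plugging in.} Assuming $\Lambda+1\le 2\rvpsym+1$, Janson's bound gives
\[
\Pr\bigl(\lvert \hat{C}_p-\langle C_p\rangle\rvert>\delta\costexpect\bigr)\le 2\exp\!\Bigl(-\frac{2n_p^2\delta^2\costsquare}{(2\rvpsym+1)\,n_p m}\Bigr)=2\exp\!\Bigl(-\frac{n_p\delta^2\costsquare}{(\rvpsym+\tfrac12)m}\Bigr).
\]
Setting the right-hand side to be at most $\epsilon$ and solving for $n_p$ yields exactly
\[
n_p\ge \frac{(\rvpsym+\tfrac12)\ln(2/\epsilon)\,m}{\delta^2\costsquare},
\]
which is Eq.~\eqref{eq:n_variance_bound_non_tight}.

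A side remark: the $n_p$ factor cancels between $t^2$ and the denominator only because independent shots do not increase $\Lambda$. This is why the dependency graph has to be taken as the disjoint union over shots, not computed across shots.
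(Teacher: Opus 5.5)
Your proposal is correct and is essentially the paper's proof of Lemma~\ref{lem:qaoa_concentration_inequality} in Appendix~\ref{appendix:qaoa_concentration_bound}: Janson's bound applied to the $n_p m$ indicators $C^{(i)}_{jk}\in\{0,1\}$, with dependency degree $\Lambda\le 2\rvpsym$ taken from the causal-cone counting of Ref.~\cite{farhi2014quantum}, the choice $t=n_p\delta\costexpect$, and solving $2\exp(\cdot)\le\epsilon$ for $n_p$. Your remark that the dependency graph is a disjoint union over independent shots, so $\Lambda$ does not grow with $n_p$, makes explicit a step the paper leaves implicit; on your flagged obstacle, the geometric sum with the factor two gives $2\rvpsym$, not $\rvpsym$, and counting edges within graph distance $2p$ of a fixed edge actually gives $\Lambda\le 2\rvpsym-2$, so the needed $\Lambda+1\le 2\rvpsym+1$ holds with room.
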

\begin{proof}
    We determine the number of shots $n_p$ that allows estimating the expectation value $\langle C_p \rangle$ to within an error of $\delta \langle C_p \rangle$, with an error probability of at most $\epsilon$, using Janson's bound from Eq.~\eqref{eq:janson_bound}. To that end, we first prove the following lemma:
    \begin{lemma}
        \label{lem:qaoa_concentration_inequality}
        Let $\langle C_p \rangle$ be the expectation of the cost for a $p$-depth QAOA circuit, and let $\langle \hat{C}_p \rangle$ denote its estimator. Let $m$ be the number of edges in the graph, $n_p$ the number of circuit measurements, and $\rvpsym$ the bound from Eq.~\eqref{eq:h_vp_definition}. Then, for any $\delta > 0$, we have:
        \begin{equation}
        \Pr \bigl(\Bigm \lvert \hat{C}_p - \langle C_p \rangle\Bigm\lvert \gt \delta \costexpect\bigr) \le 2\exp\left(-\frac{2n_p\delta^2 \langle C_p \rangle^2}{(2\rvpsym + 1)m} \right).\label{eq:variance_bound_qaua}
        \end{equation}
    \end{lemma}
    \begin{proof}
    of lemma \ref{lem:qaoa_concentration_inequality}: in Appendix~\ref{appendix:qaoa_concentration_bound}.
    \end{proof}
    To ensure that the probability of a large estimation error on the left-hand side is sufficiently small (say, smaller than $\epsilon$), we require the right-hand side to be less than $\epsilon$. This yields the required lower bound as stated in Eq.~\eqref{eq:n_variance_bound_non_tight}:
    \begin{equation}
    \nonumber
        n_p \ge \frac{\eta_{vp\epsilon}m}{\delta^2 \langle C_p \rangle^2}, 
        \label{eq:n_variance_bound}
    \end{equation}
    with $\eta_{vp\epsilon} = \left(\rvpsym+\frac{1}{2}\right)\ln\left(\frac{2}{\epsilon}\right)$ and $\rvpsym$ defined in Eq.~\eqref{eq:h_vp_definition}, which completes the proof of Thm.~\ref{thm:shots_per_iteration}. 
    \end{proof}
As a remark, note that the lower bound of Eq.~\eqref{eq:n_variance_bound_non_tight}, obtained with Janson's bound, is tighter than the one obtained using the common practice in   Eq.~\eqref{eq:n_common_practice_bound_non_tight}. To that end, we  fix the same success probability of normal distribution $1-\epsilon \approx 0.68$, and get that
\begin{equation*}
\eta_{vp\epsilon} = 
    \left(2 \rvpsym + 1\right) \frac{\ln(\frac{2}{0.32})}{2} 
    \approx 
 (2 \rvpsym + 1)\cdot0.916 <  2 \rvpsym = \eta_{vp}, 
\end{equation*}
under 
$\rvpsym \ge 6$, which holds for all $v\geq3$ or $p \geq 2$. 
    Further plugging in our assumption~\ref{ass:qaoa_cost_lower_bound} that the cost scales linearly with the number of edges (Eq.~\eqref{eq:C_lower_bound}) gives 
    \begin{equation}
    \label{eq:inverse_scaling_measure_conecentration}
         n_p \ge 
        \frac{\eta_{vp\epsilon}}{\delta^2\kappa^2 m},
    \end{equation}
    in accordance with Eq.~\eqref{eq:inverse_relation_shots_m}. 
\subsubsection{Origin of the $1/m$ shot-count scaling}
The inverse $1/m$ scaling emerges because: (a) we employ a  target \textit{relative} rather than an absolute performance metric, and (b) both the cost expectation $\langle C_p \rangle$ and cost variance $\text{Var}(C_p)$ scale linearly with the number of edges $m$. Together, these factors imply a favorable \textit{relative concentration} of the cost function: 
since the standard deviation $\sigma_{C_p}$ grows only as $\sqrt{m}$, it becomes a vanishing fraction of the $O(m)$ expectation. Consequently,
the standard deviation of the relative cost $\sigma(C_p/\langle C_p \rangle)=\sigma(C_p)/\langle C_p \rangle$ scales as $1/\sqrt{m}$,
allowing the required shot budget to decrease while maintaining a fixed relative precision. Specifically, employing the common practice logic of Eq.~\ref{eq:number_of_shots_common_practice}, we get $\shots = Var(C_p/\langle C_p \rangle)\cdot \frac{1}{\delta^2} \propto 1/m$ for a fixed relative error $\delta$.
\begin{figure}[H]
    \centering
    \includegraphics[width=1\linewidth]{illustration3.jpg}
    \caption{Schematic illustration of the concentration mechanism underlying the shot scaling results. At fixed depth $p$ and under our extensivity assumptions, the expectation $\langle C_p \rangle$ scales linearly with the number of edges $m$, while the standard deviation $\sigma(C_p)$ scales only as $\sqrt{m}$. Consequently, the relative cost $C_p/\langle C_p \rangle$ becomes more concentrated as $m$ increases, with $\sigma(C_p/\langle C_p \rangle)\propto 1/\sqrt{m}$, so fewer shots suffice to achieve the same relative estimation accuracy on larger graphs. Image generated using Gemini.}  \label{fig:concentration_illustration}
\end{figure}
\subsection{Result B: shot-number scaling along a stochastic gradient descent QAOA optimization}
\label{sec:analytical_result_b}
The result of Sec.~\ref{sec:analytical_result_a} provides an effective way of determining the number of shots needed for estimating the QAOA cost. In practice, however, QAOA is executed inside a classical optimization loop, where the cost gradient is evaluated repeatedly.
To determine the overall resource requirements, we perform a deeper analysis on the convergence of the iterative procedure, while accounting for different gradient-estimation (differentiation) methods of the cost function.
At each iteration, calculating the derivative per  tunable parameter ($2p$ parameters in total) requires the evaluation of the cost expectation at several points in the parameter space (depending on the differentiation procedure). In what follows, we analyze the number of shots required per such cost evaluation, when the parameters are updated using \textit{stochastic gradient descent} (SGD). For simplicity, we assume a fixed number of shots per cost evaluation. 
We start with Thm.~\ref{thm:shots_per_iteration_sgd} which gives a sufficient condition on the number of shots required to achieve a desired convergence accuracy, after a sufficiently large number of iterations.   Specifically, we assume that the iterative SGD procedure has reached the so-called \textit{\textbf{steady-state regime}}, where the parameters no longer converge but rather fluctuate around a local minimizer $\vec{\theta}^*$, and derive a bound on the number of shots per cost evaluation $\shots$ 

that is sufficient to limit the magnitude of these fluctuations to a desired level $\xi$.

\begin{theorem}
\label{thm:shots_per_iteration_sgd}
    Let $\costptheta$ denote the expected cost of a depth-$p$ QAOA circuit. Assume that $\costptheta$ satisfies the PL inequality (Eq.~\eqref{eq:PL_inequality})  for some $\mu>0$ on the local optimization region reached by the SGD trajectory, has an $L$-Lipschitz continuous gradient where $\mu \le L$, and satisfies Assumption \ref{ass:qaoa_cost_lower_bound}. 
    Then, for an unbiased estimator in the steady state regime (iteration $t \rightarrow \infty$),  a sufficient number of shots per cost evaluation which ensures that the relative cost gap, $d_t$ (defined in Eq.~\eqref{eq:d_t_relative_cost_gap}), is bounded by a target relative error $\xi^*$,
    i.e.\ $d_t \leq \xi^*$, is given by 
    \begin{equation}
    \label{eq:sufficient_shots_per_cost_evaluation}
        \shots \;\ge\; \frac{\varphi}{\xi^* \, \kappa \, \mu},
    \end{equation}
    where $\varphi=\varphiequal$ depends on the choice of the partial derivative approximation through $w_s$ (defined in Eq.~\eqref{eq:ws_scaling_cases}), and where $\kappa$ defined in Appendix~\ref{app:cost_function_scaling}.
\end{theorem}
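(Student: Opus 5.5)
The plan is to use the standard one-step descent argument for SGD on an $L$-smooth function satisfying the PL inequality, with a constant step size $\alpha=\frac{1}{L}$. It then reduces to a recursion for the expected gap, whose fixed point is the steady-state fluctuation level. That level is then translated into the relative gap $d_t$ using Assumption~\ref{ass:qaoa_cost_lower_bound} and the variance bounds \eqref{eq:variance_upper_bound} and \eqref{eq:partial_derivative_variance}.

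\textbf{Step 1 (descent lemma).} Write $F\equiv\costptheta$. By $L$-smoothness,
\[
F(\vec\theta^{\,t+1})\le F(\vec\theta^{\,t})-\alpha\langle\nabla F(\vec\theta^{\,t}),\hgt\rangle+\tfrac{L\alpha^2}{2}\|\hgt\|^2 .
\]
Take the expectation conditioned on $\vec\theta^{\,t}$, and use unbiasedness ($b^t=0$, so $\mathbb{E}\|\hgt\|^2=\|\nabla F\|^2+\mathrm{Var}(\hgt)$ by \eqref{eq:mse_var_bias}). With $\alpha=1/L$ this gives
\[
\mathbb{E}[F(\vec\theta^{\,t+1})]\le F(\vec\theta^{\,t})-\tfrac{1}{2L}\|\nabla F(\vec\theta^{\,t})\|^2+\tfrac{1}{2L}\mathrm{Var}(\hgt).
\]

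\textbf{Step 2 (PL and variance bound).} Apply \eqref{eq:PL_inequality} to replace $\|\nabla F\|^2$ by $2\mu(F-F^*)$. Then bound $\mathrm{Var}(\hgt)=\sum_{i=1}^{2p}\mathrm{Var}(\hat g_i^t)$ using \eqref{eq:partial_derivative_variance} and \eqref{eq:variance_upper_bound}:
\[
\mathrm{Var}(\hgt)\le 2p\cdot\frac{w_s}{\shots}\cdot 2\rvpsym m=\frac{4\varphi m}{\shots}.
\]
The factor of $2$ or $4$ will need care to match the stated constant. Taking full expectations, $e_t:=\mathbb{E}[F(\vec\theta^{\,t})]-F^*$ satisfies
\[
e_{t+1}\le(1-\tfrac{\mu}{L})e_t+\tfrac{2\varphi m}{L\shots}.
\]
Since $\mu\le L$, the contraction factor lies in $[0,1)$. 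Unrolling the recursion gives $e_t\le(1-\mu/L)^t e_0+\frac{2\varphi m}{\mu\shots}$, so in the limit $t\to\infty$ we have $e_\infty\le \frac{c\,\varphi m}{\mu\shots}$ for a small numerical constant $c$.

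\textbf{Step 3 (relative form).} Divide by $\abs{F(\vec\theta^*)}$ and use Assumption~\ref{ass:qaoa_cost_lower_bound}, i.e.\ $\abs{F(\vec\theta^*)}\ge\kappa m$. This yields $d_\infty\le \frac{c\,\varphi}{\kappa\mu\shots}$, in which the factor $m$ cancels. Requiring the right-hand side to be at most $\xi^*$ gives \eqref{eq:sufficient_shots_per_cost_evaluation}, with the numerical constant absorbed into $\varphi=\varphiequal$ (or into $w_s$). For finite-difference estimators, the bias contributes an additional $O(h^{2k})$ term via \eqref{eq:mse_bound_fd}. This term is excluded here by the unbiasedness hypothesis.

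\textbf{Main obstacle.} The delicate point is that the PL inequality and the Lipschitz bound are only assumed on the local region visited by the trajectory. The argument therefore needs the SGD iterates to remain in that region, which I would simply take as part of the hypothesis (``on the local optimization region reached by the SGD trajectory''). A second subtlety is that \eqref{eq:partial_derivative_variance} uses a maximum over $\Theta$. This maximum is still bounded by $2\rvpsym m$ uniformly, so the bound holds at every iterate. Beyond these two points, the remaining work is bookkeeping of the constant so that it matches $p\,w_s\,\rvpsym$ exactly.
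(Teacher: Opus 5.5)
Your proposal follows the paper's proof. The paper first proves Lemma~\ref{lem:relative_sgd_convergence}: the descent lemma with $\alpha=1/L$, then PL, then unrolling to $\Delta_t\le(1-\mu/L)^t\Delta_0+\sigma^2/(2\mu)$. Its version allows a biased estimator, since the bias cross term vanishes exactly at $\alpha=1/L$, whereas you use unbiasedness directly. It then bounds the gradient MSE through Eqs.~\eqref{eq:partial_derivative_variance} and \eqref{eq:variance_upper_bound}, and divides by $\lvert F(\vec\theta^*)\rvert\ge\kappa m$. The paper also treats the iterates remaining in the PL region as a standing hypothesis, as you do.

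The one step you should not take is ``absorbing the numerical constant into $\varphi$''. Both $\varphi=\varphiequal$ and $w_s$ are fixed by definition, so there is nothing to absorb it into.

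Your bookkeeping is in fact the correct one. You have $\sum_{i=1}^{2p}\mathrm{Var}(\hat g_i^t)\le 4p\,w_s\,\rvpsym\, m/\shots=4\varphi m/\shots$, hence $\xi\le 2\varphi m/(\shots\,\mu\,\lvert F(\vec\theta^*)\rvert)$. The argument therefore certifies only $\shots\ge 2\varphi/(\xi^*\kappa\mu)$.

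The paper reaches the stated constant only through the equality $\frac{4p\,w_s\,c_{vp}\,m}{\shots}=\frac{2\varphi m}{\shots}$ in Eq.~\eqref{eq:gradient_variance_bound}. Given $\varphi=p\,w_s\,c_{vp}$, this is an arithmetic slip by a factor of $2$.

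So do not force a match with the statement. State the sufficient condition with $2\varphi$, or equivalently redefine $\varphi:=2p\,w_s\,\rvpsym$. None of the $m$-scaling conclusions drawn from the theorem in Eq.~\eqref{eq:shots_per_case} are affected.
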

\begin{proof}
Our analysis starts from the following Lemma~\ref{lem:relative_sgd_convergence}, which can be viewed as a stochastic counterpart of the relative gradient descent convergence theorem of  Eq.~\eqref{eq:relative_gradient_descent_convergence}, and which makes the dependence on $\mathrm{Var}\bigl(\gttheta\bigr)$ explicit.
For readability, we write $F(\vec{\theta}) \equiv \costptheta$ and keep the depth-$p$ dependence implicit.
\begin{lemma}
\label{lem:relative_sgd_convergence}
    Let $F(\vec{\theta})$ be a cost function with an $L$-Lipschitz gradient that satisfies the PL-inequality for some $\mu > 0$ (see Eq.~\eqref{eq:PL_inequality}). Consider SGD with learning rate $\alpha = 1/L$. Then, any iteration $t$ satisfies
    \begin{equation}
        d_t \,\bigl\vert F(\vec{\theta}^*)\bigr\vert
        \;\leq\;
        d_0\left(1-\frac\mu L\right)^t
        \bigl\vert F(\vec{\theta}^*)\bigr\vert
        \;+\;
        \xi \,\bigl\vert F(\vec{\theta}^*)\bigr\vert,
        \label{eq:relative_sgd_converge_theorem}
    \end{equation}
    where $d_t$ is the relative cost gap at iteration $t$, as defined in Eq.~\eqref{eq:d_t_relative_cost_gap}, and $\xi$ is the \emph{relative stochastic error}, defined as 
    \begin{equation}
       \xi \;\overset{\text{def}}{=}\; \frac{\sigma^2} 
       {2\mu \,\bigl\vert F(\vec{\theta}^*)\bigr\vert},
       \label{eq:xi_defnition}
    \end{equation}
where $\sigma^2$ is an upper bound on the mean-squared error (MSE) of the gradient estimate
\begin{equation}
\label{eq:maximal_MSE_of_grad_estimate}
    \mathbb{E}\!\left[\bigl\|\hgt-\nabla F(\vec{\theta^{\,t}})\bigr\|^2\right] \leq \sigma^2 \quad \forall t.
\end{equation}
\end{lemma}
\begin{proof}[Proof of Lemma~\ref{lem:relative_sgd_convergence}]
    See Appendix~\ref{app:lem_relative_sgd_convergence_proof}.
\end{proof}
\textcolor{black}{
\begin{corollary}
\label{cor:iterations_for_steady_state}
Under the assumptions of Lemma~\ref{lem:relative_sgd_convergence}, for any $\varepsilon>0$
there exists $T_\varepsilon$ such that for all $T\ge T_\varepsilon$,
\begin{equation}
d_T \le \varepsilon + \xi.
\end{equation}
\end{corollary}
}
Next, to prove 
Thm.~\ref{thm:shots_per_iteration_sgd} we express $\sigma^2$, as defined in Eq.~\eqref{eq:maximal_MSE_of_grad_estimate}, in terms of the number of shots per cost evaluation $\shots$. To that end, note that the squared gradient error $\mathbb{E}\!\left[\bigl\|\hgt-\nabla F(\vec{\theta^{\,t}})\bigr\|^2\right]$ combines both the variance of the estimator and (possibly) a finite-difference truncation error (see Eq.~\eqref{eq:mse_componentwise}):
\begin{equation}
    \mathbb{E}\!\left[\|\hgt-\nabla F(\vec{\theta}^{\,t})\|^2\right] = \sum_{i = 1}^{2p}\left(\mathrm{Var}\!\left(\frac{\partial \hat{F}^{\,t}}{\partial \theta_i}\right)+{b^t_i}^2\right),
\end{equation}
In what follows, we account only for the error originating from the variance of the estimator, assuming an unbiased calculation, namely $b^t_i=0$, and analyze the truncation error term immediately after. 
We now derive an upper bound $\sigma^2$ on the (unbiased) gradient MSE using the upper bound on the variance of the cost derivative, stated in Eq.~\eqref{eq:partial_derivative_variance}, and the upper bound on the cost variance, stated in Eq.~\eqref{eq:variance_upper_bound}:
\begin{equation}
\label{eq:gradient_variance_bound}
\begin{aligned}
    \sum_{i = 1}^{2p}\mathrm{Var}\!\left(\frac{\partial \hat{F}^{\,t}}{\partial \theta_i}\right) &\leq 2p \max_{t,i} \mathrm{Var}\!\left(\frac{\partial \hat{F}^{\,t}}{\partial \theta_i}\right)
    \leq 2p \frac{w_s}{\shots} \max_{\vec{\theta}} \mathrm{Var}\bigl(F(\vec{\theta})\bigr) \\
    &\leq\frac{4p\,w_s\,c_{vp}\,m}{\shots} 
    =\frac{2\,\varphi\,m}{\shots}, 
\end{aligned}
\end{equation}
which allows us to replace $\sigma^2$ with the bound $\frac{2\varphi m}{\shots}$ in Eq.~\eqref{eq:xi_defnition}, leading to
 \begin{equation}
 \label{eq:relative_stochastic_error_final}
     \xi = \frac{\sigma^2} 
       {2\mu \,\bigl\vert F(\vec{\theta}^*)\bigr\vert}
       =
       \frac{\varphi m}{\shots \mu \,\bigl\vert F(\vec{\theta}^*)\bigr\vert}. 
 \end{equation}
From Corollary~\ref{cor:iterations_for_steady_state}, in the steady-state limit $t\to\infty$, a sufficient number of shots per cost evaluation which ensures that the noise floor satisfies $\xi\le \xi^*$ is
\begin{equation}
\shots \geq \frac{\varphi m}{\xi^*\mu \bigl\vert F(\vec{\theta}^*)\bigr\vert}.
\end{equation}
Finally, we employ our Assumption \ref{ass:qaoa_cost_lower_bound} that the cost function scales linearly with $m$ and obtain a sufficient condition on the number of shots per cost evaluation 
    \begin{equation*}
         \shots \ge  \frac{\varphi}{\xi^*
        \,\kappa\,\mu}.
    \end{equation*}
     which completes the proof of Thm.~\ref{thm:shots_per_iteration_sgd}.
    \end{proof}
\begin{remark}
    The bound above uses the same statistical ingredient that underlies the common-practice heuristic of Eq.~\eqref{eq:number_of_shots_common_practice}: the variance of a sample mean scales inversely with the number of shots. Here, this principle is applied to each cost evaluation entering the derivative estimator, which leads to the factor $\frac{w_s}{\shots}\max_{\vec{\theta}}\mathrm{Var}\bigl(F_p(\vec{\theta})\bigr)$ in the gradient variance bound of Eq.~\eqref{eq:gradient_variance_bound}.
    An alternative route would be to use measure concentration bounds to obtain probabilistic convergence guarantees.
\end{remark}
\begin{remark}
Note that combining Eq.~\eqref{eq:gradient_variance_bound} and Eq.~\eqref{eq:ws_scaling_cases} implies that the variance of the per-parameter derivative estimator scales as
\begin{equation}  
\label{eq:var_grad_bound_scaling_cases}
    \mathrm{Var}\!\left(\frac{\partial \hat{F}^{\,t}}{\partial \theta_i}\right) \;=\; 
\begin{dcases}
     \mathcal{O}(m),                            & \text{finite-difference}\\
        \mathcal{O}(m^2),       & \text{parameter-shift,}
\end{dcases}
\end{equation}
for a fixed shot count $\shots$ per (shifted) cost evaluation.
\end{remark}
To turn Eq.~\eqref{eq:sufficient_shots_per_cost_evaluation} into a graph-size scaling statement we note that the shot condition in Eq.~\eqref{eq:sufficient_shots_per_cost_evaluation} depends on the graph size through two quantities: the PL constant $\mu$ and the estimator-dependent prefactor $\varphi$. We first state a scaling assumption on $\mu$ and then distinguish the  parameter-shift and finite-difference estimators.

\vspace{-5mm}
\begin{assumption}[Linear scaling of the QAOA PL constant $\mu$]
    \label{ass:linear_scaling_mu}
    The expected QAOA cost at depth $p$, $F_p(\vec{\theta})$,  satisfies the PL inequality (Eq.~\eqref{eq:PL_inequality}) with some $\mu > 0$  which scales linearly with the size of the graph:
    \begin{equation}
        \mu = \Theta(m).
    \end{equation}
Reasoning for  Assumption~\ref{ass:linear_scaling_mu}: provided in Appendix~\ref{app:linear_scaling_mu_L}.
\end{assumption}
For the parameter-shift rule, the estimator is unbiased, so the bias term $b^t_i$ is absent and Eq.~\eqref{eq:sufficient_shots_per_cost_evaluation} holds as is. However as discussed in Sec.~\ref{sec:statistic_background}, the number of shifted cost evaluations scales linearly with the graph size in QAOA MaxCut, and hence $\varphi=\Theta(m)$. Together with $\mu = \Theta(m) $, Eq.~\eqref{eq:sufficient_shots_per_cost_evaluation} gives $\shots = \Omega(1)$, meaning that the shot count per cost evaluation is independent of graph size.

For finite difference, once the step size $h$ and approximation order $k$ are fixed, the number of shifted cost evaluations is independent of $m$ (see Eq.~\eqref{eq:ws_scaling_cases}). Hence $\varphi=O(1)$. Applying Assumption~\ref{ass:linear_scaling_mu} to the shot-noise condition in Eq.~\eqref{eq:sufficient_shots_per_cost_evaluation} therefore gives $\shots=\Omega(1/m)$ for the shot-controlled part.
The finite-difference estimator also has a deterministic truncation bias. Since, at fixed depth $p$ and bounded degree, the QAOA cost is a sum of $m$ local edge contributions, this bias can scale as $O(mh^k)$ per gradient component. Its squared contribution to the gradient MSE is therefore $O(m^2h^{2k})$ and the corresponding contribution to the relative error floor in Eq.~\eqref{eq:xi_defnition} with Assumptions~\ref{ass:qaoa_cost_lower_bound} and~\ref{ass:linear_scaling_mu} scales as
\begin{equation}
O\!\left(\frac{m^2h^{2k}}{\mu \lvert F(\vec{\theta}^*)\rvert}\right)
=
O\!\left(\frac{m h^{2k}}{\mu}\right)=O(h^{2k}).
\end{equation}
This term is controlled by the finite-difference step size, not by the number of shots. The estimate above shows that, after normalizing by $\mu \lvert F(\vec{\theta}^*)\rvert$, the possible $m^2$ growth of the squared absolute truncation error gets canceled by the extensive cost scale and by $\mu=\Theta(m)$, leaving an $O(h^{2k})$ contribution that does not grow with the graph size. Combining this term with the shot-noise contribution gives, for finite difference approximation,

\begin{equation}
\label{eq:finite_difference_xi_with_bias}
    \xi_{\mathrm{fd}}
    \;\le\;
    \frac{\varphi}{\shots \mu \,\kappa}
    + \xi_h,
    \qquad
    \xi_h=O(h^{2k}).
\end{equation}
Here $\xi_h$ denotes a nonnegative upper bound on the normalized truncation-bias contribution. Thus, for fixed depth $p$, fixed finite-difference order $k$, bounded degree, and a fixed graph-family scaling regime, $h$ only has to be chosen small enough that this deterministic error floor is below the target tolerance. If $\xi_h<\xi^*$, then the shot-noise term only has to satisfy $\frac{\varphi}{\shots\mu\kappa}\le \xi^*-\xi_h$. Once such a step size has been calibrated on representative smaller graphs, the same $h$ can be kept for larger graphs in the same regime. Increasing $m$ changes the shot-noise requirement but does not amplify this normalized truncation-bias contribution. 
Therefore, for a fixed finite-step $h$ the number of shots required to ensure a desired target relative error $\xi^*$ scales inversely with the number of edges $m$, i.e.\ $\shots=\Omega(1/m)$.
Overall, we get
\begin{equation}  
\label{eq:shots_per_case}
    \shots \;=\; 
\begin{dcases}
    \Omega\!\left(1/m\right),                            & \text{finite-difference}\\
    \Omega\!\left(1\right),                            & \text{parameter-shift}.
\end{dcases}
\end{equation}
\subsection{Result C: Iteration scaling in QAOA with stochastic gradient descent}
\label{sec:analytical_result_c}
So far, we have examined how the number of shots per cost evaluation depends on the graph size, but have not yet analyzed how the total number of iterations $T$ scales with it. 
The following lemma provides a sufficient condition for $T$ to achieve a target relative cost gap $d^*$. Taken under Assumptions~\ref{ass:linear_scaling_mu}-\ref{ass:linear_scaling_L}, this lemma shows that the iteration count $T$ is, in fact, independent of the graph size.
\begin{theorem}
    \label{thm:req_iteration_number}
    Let $d^* \in (\xi,1)$ be the desired relative cost gap from the local optimum $F(\vec{\theta}^*)$, let $\xi$ denote the relative stochastic error term as in Eq.~\eqref{eq:xi_defnition}, and let $r_p^*$ be the best possible approximation ratio at circuit depth $p$ (see Eq.~\eqref{eq:opt_approximation_ratio}). Then a sufficient condition on the total number of iterations $T$ to ensure $d_T \le d^*$ is
    \begin{equation}
        T \;\ge\;
        \log_{1-\frac{\mu}{L}}\!\left(
            \frac{(d^* - \xi)\, 2 r_p^*}{2 r_p^* - 1}
        \right).
        \label{eq:total_iterations_lower_bound}
    \end{equation}
\end{theorem}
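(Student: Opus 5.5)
The plan is to start from Lemma~\ref{lem:relative_sgd_convergence} and reduce the claim to a worst-case upper bound on the initial relative gap $d_0$, expressed through $r_p^*$. Dividing Eq.~\eqref{eq:relative_sgd_converge_theorem} by $\lvert F(\vec\theta^*)\rvert>0$ gives
\begin{equation*}
d_T \le d_0\left(1-\tfrac{\mu}{L}\right)^T + \xi .
\end{equation*}
So $d_T\le d^*$ holds as soon as $d_0 q^T \le d^*-\xi$, where $q:=1-\mu/L$. Since $d^*>\xi$, the right-hand side $d^*-\xi$ is positive.

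The key step is bounding $d_0$ independently of the instance. I will take the standard initialization in which the initial expected cut is at least that of a uniformly random assignment. At $\vec\theta^{\,0}=0$ the state is $\ket{+}^{\otimes N}$, and every edge contributes $f_{p,jk}=1/2$. Hence $F(\vec\theta^{\,0})=-m/2$. Using $F(\vec\theta^*)=-r_p^*C_{max}$ and Eq.~\eqref{eq:d_t_relative_cost_gap},
\begin{equation*}
d_0=\frac{-m/2+r_p^*C_{max}}{r_p^*C_{max}} = 1-\frac{m}{2r_p^*C_{max}} \le 1-\frac{1}{2r_p^*}=\frac{2r_p^*-1}{2r_p^*},
\end{equation*}
where the inequality uses the trivial bound $C_{max}\le m$. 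I expect this step to be the main obstacle. It is not a formal consequence of the lemma; it requires an explicit assumption on the initial point (initial cost no worse than the random cut $m/2$). I will state this assumption, and note that any better initialization only decreases $d_0$. Note also that $r_p^*\ge 1/2$, because the optimum at depth $p$ is at least the value $m/2$ at $\vec\theta=0$ and $C_{max}\le m$. Hence $2r_p^*-1\ge 0$, and if $r_p^*=1/2$ then $d_0=0$ and the claim is trivial.

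With this bound, it suffices that
\begin{equation*}
\frac{2r_p^*-1}{2r_p^*}\,q^T \le d^*-\xi, \quad\text{i.e.}\quad q^T \le \frac{(d^*-\xi)\,2r_p^*}{2r_p^*-1}.
\end{equation*}
For $0<\mu<L$ we have $q\in(0,1)$. The function $x\mapsto\log_q x$ is then decreasing, so taking $\log_q$ reverses the inequality and yields exactly Eq.~\eqref{eq:total_iterations_lower_bound}. Two edge cases complete the argument. If the argument of the logarithm is at least $1$, the bound is nonpositive and holds trivially, since $d_T\le d_0+\xi\le d^*$ already. If $\mu=L$, then $q=0$ and a single iteration suffices.

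Finally, I will remark on graph-size independence. Under Assumption~\ref{ass:linear_scaling_mu} and the companion assumption $L=\Theta(m)$, the ratio $\mu/L$ is $\Theta(1)$. Moreover, $r_p^*$ and $\xi$ do not depend on $m$: for $\xi$ this follows from Theorem~\ref{thm:shots_per_iteration_sgd}. Hence the bound on $T$ is independent of $m$.
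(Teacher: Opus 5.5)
Your proposal is correct and follows essentially the same route as the paper. The paper also applies Lemma~\ref{lem:relative_sgd_convergence} to get $d_T \le q^T d_0 + \xi$. It then bounds $d_0 \le (2r_p^*-1)/(2r_p^*)$ using $F(\vec{\theta}^{\,0}) \le -m/2 \le -C_{max}/2$, imported from Eq.~\eqref{eq:qaoa_cost_m_div_2}, together with $F(\vec{\theta}^*) = -r_p^* C_{max}$, and inverts with the decreasing map $\log_q$. Your explicit statement of the initialization assumption and your treatment of the edge cases $r_p^*=1/2$ and $\mu=L$, which the paper leaves implicit, are welcome additions.
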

\begin{proof}
    Let $q:=1-\frac{\mu}{L}\in(0,1)$. Applying Eq.~\eqref{eq:relative_sgd_converge_theorem} at iteration $T$ gives
            \begin{equation}
                d_T \le q^T d_0 + \xi,
            \end{equation}
    Therefore, a sufficient condition for $d_T\le d^*$, with $d^*>\xi$, is
            \begin{equation}
                q^T d_0 + \xi \le d^*.
            \end{equation}
    which can be rearranged into
            \begin{equation}
                T \ge \log_{q}\!\left(\frac{d^*-\xi}{d_0}\right).
                \label{eq:qaoa_iteration_count}
            \end{equation}
    Next, using $F(\vec{\theta}^*)=-r_p^* C_{\max}$, we write
    \begin{equation}
        d_0=\frac{F(\vec{\theta}^{(0)})-F(\vec{\theta}^*)}{\abs{F(\vec{\theta}^*)}} = \frac{F(\vec{\theta}^{(0)})/C_{max}-F(\vec{\theta}^*)/C_{max}}{\abs{F(\vec{\theta}^*)}/C_{max}} = 
        \frac{F(\vec{\theta}^{(0)})/C_{max}+r_p^*}{r_p^*}
    \end{equation}
    and using the result that $F_p(\vec\theta^0) \leq -\frac{m}{2}\leq -\frac{C_{max}}{2}$ from \eqref{eq:qaoa_cost_m_div_2}, we arrive at
    \begin{equation}
        d_0 \leq \frac{r_p^*-\frac{1}{2}}{r_p^*}, 
    \end{equation}
    which finally gives
    \begin{equation}
            T \ge log_{1-\frac{\mu}{L}}\left(\frac{(d^*-\xi) 2r_p^*}{2r_p^*-1}\right), 
    \end{equation}
    as required.           
    \end{proof}
\begin{assumption}[Linear scaling of the Lipschitz constants $L$]
    \label{ass:linear_scaling_L}
    The expected QAOA cost at depth $p$, $F_p(\vec{\theta})$,  satisfies the PL inequality (Eq.~\eqref{eq:PL_inequality}) and has an $L$-Lipschitz continuous gradient which scales linearly with the size of the graph:
    \begin{equation}
        L = \Theta(m).
    \end{equation}
Reasoning for Assumption~\ref{ass:linear_scaling_L}: in Appendix~\ref{app:linear_scaling_mu_L}.
\end{assumption}
\begin{corollary}
    Under Assumptions~\ref{ass:linear_scaling_mu}-\ref{ass:linear_scaling_L}, we have $\mu/L = \Theta(1)$ as $m$ grows. Consequently, the lower bound on $T$ in Eq.~\eqref{eq:total_iterations_lower_bound} depends solely on parameters that remain constant relative to the graph size. This yields 
    \begin{equation}
        T = \Theta(1), 
    \end{equation}
    meaning the number of iterations required for convergence is asymptotically independent of the graph size.
\end{corollary}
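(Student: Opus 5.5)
The plan is to read the corollary directly off the explicit bound of Thm.~\ref{thm:req_iteration_number}, Eq.~\eqref{eq:total_iterations_lower_bound}. We treat each quantity in it as a function of $m$ and check that it stays in a compact range bounded away from the degenerate values. Writing $q=1-\mu/L$, Assumptions~\ref{ass:linear_scaling_mu} and~\ref{ass:linear_scaling_L} give constants $0<c_1\le c_2$ with $c_1 m\le \mu\le c_2 m$ and $c_1 m\le L\le c_2 m$ for all large $m$. Hence $\mu/L\in[c_1/c_2,\,c_2/c_1]\cap(0,1]$. Since $\mu\le L$ is assumed, $q\in[0,\,1-c_1/c_2]$, so $q$ is bounded away from $1$ uniformly in $m$. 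This is the step that keeps $\ln q$ bounded away from $0$, and hence keeps the base of the logarithm from blowing up the bound.

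Next we control the argument of the logarithm, $A:=(d^*-\xi)\,2r_p^*/(2r_p^*-1)$.
\begin{itemize}
\item \textbf{The target $d^*$.} We regard $d^*$ as a fixed target independent of $m$.
\item \textbf{The stochastic error $\xi$.} By Eq.~\eqref{eq:relative_stochastic_error_final} together with Assumption~\ref{ass:qaoa_cost_lower_bound}, $\xi\le \varphi/(\shots\kappa\mu)$. With the shot allocation of Eq.~\eqref{eq:shots_per_case}, this can be held at a fixed level $\xi\le\xi^*<d^*$ independent of $m$. So $d^*-\xi$ lies in a fixed interval $[d^*-\xi^*,\,d^*]$ bounded away from $0$.
\item \textbf{The approximation ratio $r_p^*$.} It lies in $(1/2,1]$, since $|F_p(\vec\theta^0)|=m/2\ge C_{\max}/2$ and optimization improves on it. We additionally need $r_p^*$ bounded away from $1/2$ uniformly. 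Assumption~\ref{ass:qaoa_cost_lower_bound} alone does not give this, so we invoke the fixed-depth results cited earlier (and the Benjamini--Schramm discussion): for the graph families considered, $r_p^*$ converges to a limiting value strictly above $1/2$.
\end{itemize}
With these facts, $A$ lies in a compact subinterval of $(0,\infty)$ independent of $m$.

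Finally, $T_{\min}=\ln A/\ln q$ is then bounded above by a constant independent of $m$. This gives $T=O(1)$, and choosing $T=\lceil T_{\min}\rceil$ (at least $1$) gives $\Theta(1)$. The degenerate case $q=0$ (when $\mu=L$) trivially gives $T\ge 1$.

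The main obstacle is the uniformity of $r_p^*$ away from $1/2$ and of $\xi$ away from $d^*$. Both are statements about the whole graph sequence rather than consequences of the two assumptions named in the corollary. I would handle them by making the $m$-independence of $d^*$, $\xi^*$ (via the shot choice of Thm.~\ref{thm:shots_per_iteration_sgd}) and the limit of $r_p^*$ explicit hypotheses, justified by the cited fixed-angle/tree-QAOA literature.
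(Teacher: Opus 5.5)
Your proposal is correct and follows the paper's own reasoning: read $T$ off Eq.~\eqref{eq:total_iterations_lower_bound}, with $\mu/L=\Theta(1)$ keeping the base $1-\mu/L$ bounded away from $1$, while you make explicit the $m$-independence of $d^*$ and of $\xi\le\xi^*$ that the paper leaves implicit. One simplification: the uniform bound on $r_p^*$ away from $1/2$, which you call the main obstacle, is not needed, since $2r/(2r-1)$ is decreasing on $(1/2,1]$ and hence at least $2$, so the argument of the logarithm is at least $2(d^*-\xi^*)$ and $T_{\min}\le\max\{0,\ln\frac{1}{2(d^*-\xi^*)}/\ln\frac{1}{1-c_1/c_2}\}$ regardless of $r_p^*$.
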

    Together with the shot scaling $\shots = \Omega(1/m)$ derived for stochastic gradient descent in Eq.~\eqref{eq:shots_per_case} using the finite-difference approximation, the result $T = \Theta(1)$ achieved under Assumptions~\ref{ass:linear_scaling_mu}-\ref{ass:linear_scaling_L}, shows that larger graphs require the same number of iterations but \textit{fewer} shots per cost evaluation. Consequently, for a fixed $p$-depth QAOA circuit, achieving a target relative performance for the MaxCut problem actually becomes more shot-efficient as the graph size increases, reducing the total number of shots required across the entire optimization process.
To conclude, while Sec.~\ref{sec:analytical_result_a} establishes the scaling of the number of shots required to estimate the cost itself via Eq.~\eqref{eq:n_variance_bound_non_tight}, Sec.~\ref{sec:analytical_result_b} determines the scaling of the number of shots per cost evaluation needed for accurate gradient estimation via Eq.~\eqref{eq:sufficient_shots_per_cost_evaluation}, and Sec.~\ref{sec:analytical_result_c} derives the scaling of the total iterations required for convergence via Eq.~\eqref{eq:total_iterations_lower_bound}. Together, 
these results clarify the scaling of total shot complexity with graph size for solving MaxCut via QAOA.
\subsection{The local structure perspective} 
We briefly discuss structural settings under which Assumptions~\ref{ass:qaoa_cost_lower_bound},\ref{ass:linear_scaling_mu},\ref{ass:linear_scaling_L} are plausible, and when this leads to an inverse scaling between the number of edges and the number of shots required for cost estimation.
From a structural standpoint, a convenient sufficient condition supporting all three assumptions is \BS convergence of the graph sequence, which guarantees that the empirical frequencies of rooted radius-$p$ neighborhoods converge to a limiting distribution. 
This local-limit viewpoint is well established in classical combinatorial optimization, where local algorithms can be analyzed through the limiting local structure and yield provable cut densities in locally tree-like regimes \cite{lyons2016factorsiidtrees}. Closely related ideas appear in the QAOA literature for random regular and sparse \ER graphs: for fixed depth $p$, the per-edge objective contribution (see Eqs.~\eqref{eq:qaoa_subgraph_cost}-\eqref{eq:qaoa_subgraph_cost2}) concentrates, such that a fixed depth $p$ suffices with fixed optimal parameters, which are independent of graph size \cite{brandao2018fixed,streif2019treeqaoa,wurtz2021fixedangle,wybo2025missing}.
In such settings it is sufficient to optimize the QAOA parameters on small representative instances and transfer these angles to much larger instances without repeated outer-loop optimization, as demonstrated in the above references. 
Combined with our shot-scaling result under Assumption~\ref{ass:qaoa_cost_lower_bound}, we find that larger instances not only circumvent the need for an outer optimization loop but also require fewer shots to estimate the cost to a fixed relative accuracy and confidence level, thus offering a clear scaling benefit.
Finally, \BS convergence is sufficient but not necessary for our conclusions. Even when a sequence does not strictly admit a local weak limit, the inverse scaling between the number of edges and the required number of shots persists whenever Assumptions~\ref{ass:qaoa_cost_lower_bound}, \ref{ass:linear_scaling_mu}, and \ref{ass:linear_scaling_L} are satisfied.
 \subsection{Setting the number of shots per cost evaluation: practitioner heuristics}
Our analytical bounds can be translated into a simple, conservative recipe for determining the number of measurement shots required for each evaluation of the QAOA cost. The key inputs are the circuit depth $p$, the chosen gradient-estimation method, and a target optimization tolerance. The choice of estimator  determines the relevant variance prefactors $w_s$ in Eq.~\eqref{eq:partial_derivative_variance}, which in turn dictate the shot scaling prefactors in Eq.~\eqref{eq:shots_per_case}. Throughout this subsection, we use $d_T$ to denote a \emph{target steady-state relative cost gap}. The goal is to allocate a shot budget large enough such that, in expectation, gradient-estimation noise does not prevent the optimizer from reaching the target $d_T$ relative to the statistical-noise-free steady-state level; that is, the steady-state cost attained with state-vector simulations, corresponding to the limit of an infinite number of shots.
 In practice, for a fixed depth $p$, Thm.~\ref{thm:shots_per_iteration_sgd} together with Eq.~\eqref{eq:shots_per_case} suggest two regimes: for finite-difference gradient estimation, the required number of shots per cost evaluation decreases approximately as $1/m$, whereas for parameter-shift it remains approximately constant, up to graph structure and depth-dependent prefactors. We therefore recommend conducting a short calibration phase on small representative instances of the target graph family. The inverse-size extrapolation rule below applies to the finite-difference case:
\begin{enumerate}
    \item \textbf{Fix the target and the estimator.}
    Select the circuit depth $p$, the target tolerance $d_T$, and the gradient estimation method (finite-difference or parameter-shift).
    \item \textbf{Calibrate on small representative graphs.}
    Sample a small set of representative instances $\{G_{m_0}^{j}\}_{j=1}^J$ from the target graph family at a modest edge count $m_0$. For each instance, first run the same depth-$p$ optimization in a shot-noise-free mode, replacing sampled cost and gradient evaluations by state-vector values, and record the resulting steady-state reference cost $F_p(\vtheta^{\,*};G_{m_0}^{j})$. Then rerun the optimizer under several candidate shot budgets per cost evaluation, $S\in\{S_1,S_2,\dots\}$. For each $S$, record the value of the empirical relative cost gap from the steady-state reference cost
    \[
    d_t \;=\; \frac{F_p(\vtheta^{\,t};G_{m_0}^{j})-F_p(\vtheta^{\,*};G_{m_0}^{j})}{\bigl\lvert F_p(\vtheta^{\,*};G_{m_0}^{j})\bigr\rvert},
    \]
    estimated from the final portion of the run (e.g., the last $T$ iterations after convergence diagnostics indicate a plateau). Then choose, for each graph, the smallest shot count $S^{j}$ such that the observed steady-state relative cost gap is within the desired $d_T$. Finally, set the calibrated shot budget as a robust aggregate over the calibration set (e.g., the median or mean),
    \[
    S_{m_0} \;:=\; \mathrm{median}_{j \in J}\, S^{j} \qquad (\text{or mean}).
    \]
    \item \textbf{Fix transferable angles.}
If the graph family is \BS convergent and admits angle transfer across sizes, the calibration procedure simplifies: first, optimize the angles once on the calibration set to obtain transferable angles  $\vec{\theta}$, which can then be reused on larger graphs, following \cite{brandao2018fixed,wurtz2021fixedangle,wybo2025missing}. Then, determine the minimal shot count $S_{m_0}$ for which the cost estimate at $\vec{\theta}$ achieves the desired accuracy, relative to  the statistically noise-free simulation, within the target confidence level.
    \item \textbf{Scale the shots to larger instances.}
    For a larger instance with $m$ edges, choose the shots per cost evaluation according to the scaling rule suggested by the analytical bounds. For the finite-difference case considered here, this gives
    \begin{equation}
        \label{eq:shots_scaling_recipe}
        S_m \;\approx\; S_{m_0}\,\frac{m_0}{m},
    \end{equation}
    according to Eq.~\eqref{eq:shots_per_case}.
    \item \textbf{Run and verify.}
    Run QAOA on the larger instances using either the transferred angles or re-optimized angles if transfer is unreliable for the graph family under study.
\end{enumerate}  
These steps are intentionally conservative. The analytical bounds are sufficient conditions rather than tight prescriptions, and they rely on the assumptions stated in Secs.~\ref{sec:analytical_result_b}--\ref{sec:analytical_result_c}. Still, they make explicit how the shot budget depends on the problem size, the gradient estimation method, and the circuit depth through concrete prefactors. In the numerical results section below, we apply this \textit{\textbf{calibration-and-scaling}} methodology and demonstrate it numerically.
\section{Experimental results} \label{sec:exp_results}
In what follows, we present numerical results to demonstrate and support our analytical analysis.

\subsection{Computational setup}
All experiments were conducted with noiseless classical simulations, with finite shot count, using the PennyLane Python package \cite{pennylane,pennylane2,pennylane3}.
In all cases, we initialized the tunable \gammabeta parameters randomly, unless stated otherwise. We chose to work with the following sets of graphs, where the same instances are used repeatedly across the various experiments:
\begin{enumerate}
    \item Randomly generated 3-regular graphs with $N=\{10,12,\ldots,22\}$ nodes, corresponding to $m\in\{15,18,21,24,27,30,33\}$ edges (10 instances per size).
    \item Sparse \ER graphs with $N\in\{10,12,\ldots,22\}$ nodes with edge probability $p=\frac{3\pm0.1}{N-1}$. 
    These sparse \ER graphs have an expected degree of $v=3\pm0.1$, making them comparable to the 3-regular graphs with $m\in\{15,18,21,24,27,30,33\}$ edges (10 instances per size).
    \item Connected random graph structures with $m\in\{15,18,21,24,27,30,33\}$ edges (10 instances per size). Each instance was generated by first sampling a feasible number of vertices $N\le26$, constructing a random tree to ensure connectivity, and then adding randomly selected missing edges until the graph had exactly $m$ edges.
    \item 
    2-regular graphs, i.e.\ cycles, for which there is full isomorphism of all edge $p$-neighborhoods, thus providing a clear numerical demonstration of our cost gap analysis shown in Appendix~\ref{app:parameter_shift_rule_results}).
\end{enumerate} 
\subsection{Results supporting Result A - shot-number scaling for estimating the QAOA cost expectation (Thm.~\ref{thm:shots_per_iteration})}
Here we illustrate the main prediction of Sec.~\ref{sec:analytical_result_a}: for a fixed circuit depth and comparable sparse graph families, the number of shots needed to estimate the QAOA cost to a prescribed relative accuracy decreases as the graph size grows. 
We begin with Fig.~\ref{fig:ecost_variance_erdos_renyi}, which examines the scaling of $\operatorname{Var}(C_p)$ because this is the quantity entering the common-practice shot estimate in Eq.~\eqref{eq:number_of_shots_common_practice}. Thus, the first numerical check concerns the variance-based route to the shot-number scaling. The Janson-bound route is separate: it does not use $\operatorname{Var}(C_p)$, but rather the bounded dependency degree of the edge-measurement variables, and is later probed more directly by the shot-count experiment in Fig.~\ref{fig:qaoa_measure_concentration_grid}.
Figs.~\ref{fig:regular} and~\ref{fig:er}
report the variance $\operatorname{Var}(C_p)=\langle C_p^2\rangle-\langle C_p\rangle^2$ computed using state-vector simulations on random 3-regular and sparse \ER graphs, respectively. 
For each graph size, we averaged over 10 graph instances and evaluated the variance for both $p=1$ (solid blue) and $p=2$ (solid red) at randomly sampled angles $(\gamma,\beta)$, without running an optimization loop.
    \begin{figure}[H]
        \centering
        \subfloat[3-Regular]
        {\includegraphics[width=0.45\linewidth]{figs/3regular_results/cost_variance_3regular_random_params_1_2_layers_with_bound.png}\label{fig:regular}}
        \subfloat[\ER]
        {\includegraphics[width=0.45\linewidth]{figs/erdos_renyi_sgd_finite/cost_variance_erdos_renyi_random_params_1_2_layers_with_bound.png} \label{fig:er}}
        \caption{\protect{\textbf{Scaling of the QAOA cost variance with graph size.} 
        Dashed curves show the analytical upper bound from Eq.~\eqref{eq:variance_upper_bound}, while solid curves show the state-vector evaluated variance $\operatorname{Var}(C_p)$ for both $p=1$ (blue) and $p=2$ (red). \textbf{(a)} Random 3-regular graphs with $m=\frac{3N}{2}$ edges. \textbf{(b)} Sparse \ER graphs with edge probability $3/(N-1)$, which yields an average of $m=\frac{3N}{2}$ edges. For each size, the values are averaged over 10 graph instances. It is seen that the variance grows approximately linearly with $m$, and the analytical bound remains conservative.}}
        \label{fig:ecost_variance_erdos_renyi}
    \end{figure}
    The near-linear growth in both panels of Fig.~\ref{fig:ecost_variance_erdos_renyi} is the main numerical point: over the tested range, the variance of the cost  scales extensively with $m$. The dashed analytical upper bounds, calculated according to Eq.~\eqref{eq:variance_upper_bound}, derived in Ref.~\cite{farhi2014quantum}, which by definition scale linearly with $m$, are visibly non-tight.
    This gap is expected, because the bound in  Eq.~\eqref{eq:variance_upper_bound} was obtained   using the maximal possible covariance of $1$ (see Ref.~\cite{farhi2014quantum}). Numerically, these covariance contributions are typically much smaller and can even be negative, thereby reducing the total variance of the edge sum relative to the worst-case bound.
    The analytical results of Sec.~\ref{sec:analytical_result_a} show an inverse relationship between the number of edges in the graph and the number of shots needed to achieve the same relative error $\delta$ and confidence level $1-\epsilon$ for the cost estimation.     
    Fig.~\ref{fig:qaoa_measure_concentration_grid} next shows that larger graphs of similar structure require fewer shots to estimate the cost to the same relative error $\delta$, in accordance with Observation~\ref{thm:shots_based_on_common_practice} and Thm.~\ref{thm:shots_per_iteration}. Moreover, it shows that in practice, the ``structural similarity" condition may be relaxed and that the same observation holds qualitatively for the generated random connected graph structures.
    The experiment was performed using noiseless QAOA MaxCut simulations on the sets of 3-regular, sparse \ERc and random connected graphs with a fixed depth of $p=1$ and increasing graph size.
    We performed the experiment with $\delta=0.2$ and random circuit parameters $(\gamma,\beta)$. 
    We used an exhaustive search to find the minimal number of shots $n$ needed to achieve $\epsilon$ values in  $\left[0.2,0.15,0.1,0.05,0.025\right]$ where $\epsilon$ is the probability that the estimation error exceeds $\delta \costexpect$, as in Thm.~\ref{thm:shots_per_iteration}. 
    Note that as $\epsilon$ grows, the number of required measurements decreases, as expected, and that the same qualitative behavior is observed throughout for 3-regular, \ERc and random connected graphs.  
    Fig.~\ref{fig:qaoa_measure_concentration_grid} therefore provides a direct numerical illustration of Result~A: provided the cost remains extensive, larger graphs (of similar local structure) require fewer measurements to reach the same relative estimation accuracy.    
    \begin{figure}[H]
    \centering
    \subfloat[3-Regular]
    {\includegraphics[width=0.33\textwidth]{figs/3regular_results/shots_as_function_of_size_3regular.png}
    }
    \subfloat[\ER]{\includegraphics[width=0.33\textwidth]{figs/erdos_renyi_sgd_finite/shots_as_function_of_size_erdos_renyi.png}}
    \subfloat[Random Structure]{\includegraphics[width=0.33\textwidth]{figs/random_graphs/shots_as_function_of_size_random_graphs.png}}
         \caption[Required shots for relative-error cost estimation]{\textbf{Required shots for relative-error cost estimation versus graph size.} The figure shows the smallest number of shots required to estimate $\langle \hat{C}_p\rangle$ to within a relative error $\delta=0.2$ with confidence level $1-\epsilon$, as a function of the number of edges $m$ at a fixed depth of $p=1$. One curve is plotted for each value of $\epsilon$, and the same sampled circuit angles are used across all graph sizes. \textbf{(a)} 3-regular graphs. \textbf{(b)} Sparse \ER graphs with edge probability $3/(N-1)$. \textbf{(c)} Connected simple graphs with specified number of edges. For each size, we use a single random graph instance from the corresponding family. In all cases, the required shot count decreases with graph size, in qualitative agreement with Eq.~\eqref{eq:inverse_relation_shots_m}.}
        \label{fig:qaoa_measure_concentration_grid}
    \end{figure}     
    Finally, Fig.~\ref{fig:numerical_illustration} provides a concrete numerical demonstration of the general trend established above. 
    To that end, we consider randomly sampled 3-regular graphs of increasing size, using 10 graph instances per size. 
    Building on prior tree-QAOA results \cite{brandao2018fixed,streif2019treeqaoa,wurtz2021fixedangle,wybo2025missing}, we demonstrate that a fixed circuit depth $p$  and fixed angles \gammabeta maintain a narrow range of approximation ratios, despite using a shot count that decreases with the number of edges $m$.
    Specifically, we optimize the angles \gammabeta separately for the 10 instances of the smallest graph size, and then apply their circular  mean to all larger graph sizes without further outer-loop optimization. The reported approximation ratios $r_p$ are averaged over the instances of each size.
    This observed reduction in the required sampling budget serves as a qualitative numerical validation of Eq.~\eqref{eq:inverse_relation_shots_m}: for graph instances with similar local structure, the number of shots in QAOA MaxCut needed to maintain a target relative estimation error  typically scales inversely with graph size. 
    By combining this reduced sampling requirement with the parameter concentration of tree-QAOA, we moreover maintain stable performance in terms of the approximation ratio.
    \begin{figure}[H]
        \centering
        \includegraphics[width=0.55\linewidth]{figs/3regular_results/larger_graphs_need_less_shots_3regular_2layers.png}
        \caption{\textbf{Numerical illustration of inverse shot scaling on random 3-regular graphs.} The figure shows the shot budget, $\shots$, required to  estimate the QAOA cost, 
        as a function of the number of edges $m$ for a family of random 3-regular graph instances (10 graph instances per size). The evaluation uses fixed angles at depth $p=2$ optimized solely for the smallest graph and applied throughout across all larger instances. The numerical labels indicate the achieved approximation ratios $r_p$. 
        Notably, the required shot budget decreases with graph size while the approximation ratios remain stable.}
        \label{fig:numerical_illustration}               
    \end{figure}

\subsection{Results supporting Result B - shot-number scaling along a stochastic gradient descent QAOA optimization (Thm.~\ref{thm:shots_per_iteration_sgd})}

This section provides numerical support for the analytical result presented in Sec.~\ref{sec:analytical_result_b} regarding the shot count required in SGD QAOA optimization. The goal is to demonstrate the scaling ingredients comprising  Thm.~\ref{thm:shots_per_iteration_sgd}.
All optimization experiments reported here use finite-difference gradient estimates, so that the numerical comparison is aimed specifically at the finite-difference branch of Eq.~\eqref{eq:shots_per_case}. The case of the parameter-shift rule is addressed in Ref.~\ref{app:parameter_shift_rule_results}.

\par\medskip\noindent\textbf{Gradient estimator variance}\par

    We begin by examining how the variance of the 
    gradient estimators  scales with graph size. This is the natural optimization-side analogue of the cost-variance study above, since the shot budget required by SGD depends on the noise level of the estimated gradient. In particular, it tests the variance relation summarized in Eq.~\eqref{eq:partial_derivative_variance}. To quantify the gradient noise entering SGD, we evaluate the depth-$1$ 
    finite-difference and gate-wise parameter-shift estimators on random 3-regular graphs of sizes $N\in\{10,12,\ldots,22\}$ (10 instances per graph size).
    To that end, we used a single, fixed, randomly sampled value per parameter $(\gamma_1,\beta_1)$. 

    Fig.~\ref{fig:partial_derivative_variance} shows the estimator-dependent trends predicted by Eq.~\eqref{eq:var_grad_bound_scaling_cases}: the finite-difference estimator variances with respect to $\gamma_1$ and $\beta_1$ increase approximately linearly with the number of edges $m$, while the parameter-shift estimator variances grow approximately quadratically. 
    For the finite-difference branch, this linear variance growth, combined with the extensive cost scaling in Assumption~\ref{ass:qaoa_cost_lower_bound} and the linear scaling $\mu=\Theta(m)$ assumed in Eq.~\eqref{eq:shots_per_case}, supports the conclusion that the relative stochastic error, $\xi$, decreases with graph size for a fixed shot budget (see Eq.~\eqref{eq:relative_stochastic_error_final}). Equivalently, maintaining a fixed target relative error, $\xi^*$, requires a number of shots per cost evaluation that decreases as $1/m$ in the finite-difference setting (see Eq.~\eqref{eq:sufficient_shots_per_cost_evaluation}).
    
    \begin{figure}[H]
        \centering
        \subfloat[Finite-difference]
        {\includegraphics[width=0.5\textwidth]{figs/3regular_results/partial_difference_variance_3_regular_finite_diff.png}}
        \subfloat[Parameter-shift rule]
        {\includegraphics[width=0.5\linewidth]{figs/3regular_results/partial_difference_variance_3_regular_param_shift.png}}
        \caption[Gradient-estimator variance versus graph size]{\protect{
        \textbf{Sample variance of depth-$1$ gradient estimators on random $3$-regular graphs, comparing (a) the \textit{finite-difference approximation} (left) and (b) the \textit{parameter-shift estimator} (right).} Solid blue and red lines denote the estimators for $\partial C_p/\partial \gamma_1$ and $\partial C_p/\partial \beta_1$, respectively. 
        A fixed number of shots $n_p$ per cost evaluation is used in each experiment: $n_p=1000$ in the finite-difference experiment, and $n_p=10$ in the parameter-shift rule experiment. 
        For each graph size $N\in\{10,12,\ldots,22\}$, the reported variance is averaged over $10$ graph instances and estimated from $100$ gradient samples per instance. 
        A single, fixed evaluation point $(\gamma_1,\beta_1)$, chosen at random, is applied across all instances.
        Dashed curves indicate the corresponding scaling fits: the finite-difference variance grows approximately linearly with the number of edges $m$, while the parameter-shift estimator variance grows approximately quadratically with $m$,
        in accordance with Eq.~\eqref{eq:var_grad_bound_scaling_cases}.}
        }
    \label{fig:partial_derivative_variance}
    \end{figure}
    \par\medskip\noindent\textbf{Steady-state behavior under inverse shot scaling}\par
    We next study how the shot budget per cost evaluation can be reduced as the graph size increases while maintaining comparable optimization quality. As a summary statistic, we use the relative cost gap $d_t$, defined in Eq.~\eqref{eq:d_t_relative_cost_gap}.      
    For each graph size, we ran QAOA with SGD for $T=100$ iterations, treating the first $T_0=40$ iterations as the convergence phase (we verified numerically that after $T_0$, the mean remaining improvement in the relative cost expectation, averaged over all graph instances and optimization runs, dropped below $10^{-2}$), 
    and summarized steady-state performance using the mean value over the remaining iterations $\{d_t\}_{t=T_0+1}^{T}$. In the experiments below, the number of shots per cost evaluation in the finite-difference gradient estimator is scaled explicitly as $\shots \propto 1/m$, in accordance with Eq.~\eqref{eq:shots_per_case}. We fixed the QAOA depth to $p=1$ and applied shot noise only to the gradient estimator, while the objective $F(\theta^t)$ used to compute the relative cost gap $d_t$ was computed as an exact state-vector expectation. 
    Fig.~\ref{fig:3regular_sgd_relative_cost_gap} shows that using $\shots \propto 1/m$ preserves comparable steady-state optimization accuracy across the tested sizes. This is consistent with the finite-difference branch in Eq.~\eqref{eq:shots_per_case} and provides direct numerical support for Result B in the studied regime. The same qualitative behavior is observed for both higher shot schedules (blue)  and lower shot schedules (black), leading to correspondingly smaller and larger relative cost gaps, as expected.
    \begin{figure}[H]
        \centering
        \subfloat[\textbf{3-Regular graphs:} shots per cost evaluation, set according to Eq.~\eqref{eq:shots_scaling_recipe}.]
{\includegraphics[width=0.45\linewidth]{figs/3regular_results/3regular_increased_size_less_shots.png}}
        \quad
       \subfloat[Corresponding steady-state relative cost gap $d_t$ versus number of edges $m$.]
{\includegraphics[width=0.45\linewidth]{figs/3regular_results/relative_distance_3regular_increased_size_less_shots_mean.png}}
        \caption[QAOA-SGD under inverse shot scaling on 3-regular graphs]{\protect\textbf{Numerical performance of QAOA-SGD with shots scaled inversely with graph size on random 3-regular graphs with $N\in\{10,14,18,22\}$ nodes.} We used $n=3000/m$ (blue) and $n=1500/m$ (black) shots at a fixed circuit depth $p=1$. 
        \textbf{(a)} 
        Shots per cost evaluation used to estimate the gradients
        as a function of the number of edge. \textbf{(b)} 
        The relative cost gap ($d_t$) in steady-state as a function of edges count. 
        The plotted steady-state value is the mean of $d_t$ over the last $T-T_0$ iterations, with $T=100$ and $T_0=40$. 
        The observed steady-state gap varies only weakly across the tested sizes.
        }
\label{fig:3regular_sgd_relative_cost_gap}
    \end{figure}
    \subsection{Results supporting Result C - Iteration scaling in QAOA with stochastic gradient descent (Thm.~\ref{thm:req_iteration_number})} 
    We now turn to the scaling of the number of optimization iterations. 
    The following results illustrate that the convergence rate remains consistent across sizes, while the shot budget required for stable stochastic gradient descent updates decreases with graph size.
    Our focus in this subsection is the optimization trajectory and the resulting size-robust convergence pattern, which supports the scaling claim underlying Thm.~\ref{thm:req_iteration_number}.
    \begin{figure}[H]
        \centering
        \subfloat[
        \textbf{Random 3-regular graphs:} the relative cost gap $d_t$ (defined in Eq.~\eqref{eq:d_t_relative_cost_gap}) as a function of SGD iterations with \textbf{finite-difference approximation} and $\shots \propto 1/m$.
        ]
{\includegraphics[width=0.45\linewidth]{figs/3regular_results/3regular_sgd_convergence_finite_diff_inverse_shots_new.png}}
        \quad
       \subfloat[Higher resolution: the corresponding standard deviation $\sigma(d_t)$ as a function of iterations.]
       {\includegraphics[width=0.45\linewidth]{figs/3regular_results/3regular_std_dev_finite_diff_inverse_shots.png}}
        \caption[Representative SGD trajectories under inverse shot scaling]{\protect\textbf{Relative cost gap ($d_t$) trajectories on random 3-regular graphs under inverse shot scaling using the finite-difference approximation.} 
        Graphs with $N\in\{10,14,18,22\}$ nodes, corresponding to $m\in\{15,21,27,33\}$ edges, were considered, with 10 graph instances per size and 5 independent finite-shot SGD trajectories per instance, at circuit depth $p=1$. All trajectories were initialized from the same fixed parameter vector $(\gamma_1,\beta_1)=(0,0.5)$. The values $d_t$ were evaluated from exact state-vector cost expectations along the SGD trajectories, using the corresponding noiseless reference value according to Eq.~\eqref{eq:d_t_relative_cost_gap}.
        A finite-difference gradient estimator was employed, using a number of shots per circuit evaluation proportional to $1/m$.  
        \textbf{(a)} The mean relative cost gap $d_t$ is shown as a function of the iteration number (solid lines) together with the  standard deviation $\pm\,\sigma(d_t)$ (shaded). \textbf{(b)} The corresponding standard deviation $\sigma(d_t)$ isolated for clarity. 
        Across all the tested sizes, the convergence rates and trajectory spreads remain comparable while the shot budget per cost evaluation decreases with increasing graph size, illustrating Eq.~\eqref{eq:shots_per_case} in the finite-difference case.}
        \label{fig:3regular_sgd_finite_difference}
    \end{figure}
    To complement the steady-state analysis, Fig.~\ref{fig:3regular_sgd_finite_difference} shows full SGD trajectories on random 3-regular graphs 
    using a finite-difference gradient estimator with a shot budget that decreases inversely with graph size. 
    Here we use random 3-regular graphs because their local structure is comparable across sizes, which makes the cross-size comparison consistent. The stochasticity enters only through the sampled gradient, whereas the cost values used for the relative gap ($d_t$) calculations rely on exact state-vector expectations. Notably, the convergence curves  and their spread remain comparable across the tested sizes while the shot budget per cost evaluation decreases with the number of edges according to $\shots \propto 1/m$. This size-independent convergence behavior is consistent with the constant iteration conclusion of Thm.~\ref{thm:req_iteration_number}. For the analogous trajectories obtained with the  parameter-shift estimator, see Appendix~\ref{app:parameter_shift_rule_results}.
    Taken together, the numerical results of this section support the analytical picture developed in Secs.~\ref{sec:analytical_result_a}-\ref{sec:analytical_result_c}: for graph families with comparable local structure and fixed QAOA depth, the relevant variances grow extensively, 
    which in turn enables the shot budget needed per cost evaluation to decrease with graph size when using the finite-difference approximation, while the number of SGD optimization iterations remains independent of the system size.
    
\section{Discussion}
\label{sec:conclusions}
By analyzing the statistical behavior of the MaxCut cost function and its gradients, we established rigorous bounds on the sampling complexity and iteration count required for a target relative performance level. Our analysis shows that for large instances with fixed maximum degree $v$ and circuit depth $p$, the number of shots required to estimate the QAOA cost function to within a relative performance typically scales inversely with the number of edges $m$. These "blessings of dimensionality" results extend earlier findings regarding the transferability of optimal parameters from small to large instances in regular graphs \cite{brandao2018fixed,streif2019treeqaoa,wurtz2021fixedangle,wybo2025missing}. We show that for such instances, not only is the classical optimization loop potentially redundant, but the required shot budget actually decreases as the system size grows.
Extending our analysis to Stochastic Gradient Descent (SGD), we examined the sampling and iteration requirements for achieving a specific convergence target. We found that for graphs with a consistent local structure as $m$ grows, the shot budget for the parameter-shift rule is independent of system size. In contrast, using finite differencing allows the shot count to decrease as $1/m$, while the total iteration count remains constant. Together, these results imply that the overall time-to-solution (TTS) for QAOA may actually improve as the problem instance grows, provided a relative performance metric is used.
A decreasing time-to-solution with problem size is a unique phenomenon within the standard algorithmic landscape, both classical and quantum, where larger problem instances typically demand increased computational resources, such as more measurement shots or optimization steps. Our analysis provides deeper insight into the interplay between local graph structure and the computational resources required when performing QAOA for MaxCut, demonstrating how local concentration can be leveraged to reduce sampling overhead. Ultimately, these findings provide both analytical and numerical evidence for the scalability of QAOA in solving large-scale MaxCut instances.
The methodology and results presented here can be further investigated to find similar behaviors in other combinatorial problems solved with QAOA that exhibit similar concepts of causality and local dependence. Extending these findings to other variational quantum algorithms (VQAs) where relative performance is the primary metric would also be of significant interest. Note, however, that in cases where absolute accuracy is required, such as in the variational quantum eigensolver (VQE), where a fixed energy error is desired irrespective of molecular size, such an advantage is not expected. It would also be of interest to provide tighter bounds on the statistical behavior of the measurement operator by formalizing the variance expression as a function of subgraph types.
Finally, we leave open the question posed in Conjecture~\ref{conj:open_ended_edge_positive}: whether every open-ended radius-$p$ edge neighborhood has a positive local contribution for any parameters assignment.
\newpage
\bmhead{Acknowledgments}
We thank Prof. Reuven Cohen for useful discussion. 
We acknowledge the use of the open-source PennyLane software for implementing and simulating the quantum circuits used in this work \cite{pennylane}.

\bmhead{Competing Interests}
The Authors declare no competing financial or non-financial interests.

\bmhead{Data Availability}
All data generated or analyzed during this study are included in this published article and its supplementary information files.

\bmhead{Code Availability}
The underlying code for this study is available in GitHub and can be accessed via this link \url{https://github.com/inbarchef/qaoa-shot-analysis}.

\bmhead{Author Contributions} 
US and AM supervised the project. All authors developed the theory. IC designed the computational experiments and performed all computations. IC and AM wrote the manuscript with support from US. All authors helped shape the research, analysis, and manuscript.
\begin{appendices}

\section{Cost function scaling assumption}\label{app:cost_function_scaling}
\noindent
  
In what follows, we provide reasoning behind Assumption~\ref{ass:qaoa_cost_lower_bound}, restated next for completeness:
\textit{Let $G$ be a graph with $m$ edges, and let $\vec{\theta}\in\mathbb{R}^{2p}$ denote the QAOA tunable parameters. Then there exists a constant $\kappa>0$, independent of $m$, such that}
\begin{equation}
\costexpect \equiv \big\lvert \FpThetaG \big\rvert \;\ge\; \kappa\, m .
\end{equation}
Below we provide three complementary arguments supporting this extensivity. Reasoning I fixes the graph $G$ and averages only over the optimization randomness. Reasoning II fixes the parameters $\vec{\theta}$ and averages over the graph ensemble, including random regular graphs and sparse \ER graphs. Reasoning III supplies the local positivity criterion needed in Reasoning II, by identifying edge-neighborhood types with strictly positive contribution.
\paragraph{Reasoning I: Initialization of $\vec{0}$}
Let $(\boldsymbol{\vec\gamma}^{0},\boldsymbol{\vec\beta}^{0})=(\boldsymbol{0},\boldsymbol{0})$ be the initial parameters for the optimization. 
At this point the QAOA unitary is the identity, so the state is $\ket{\psi_0}=\ket{+}^{\otimes q}$ (with $q$ qubits/vertices). For any edge type $g$ and any depth $p$, the corresponding per-edge contribution evaluates to
\[
f_{p,g}(\boldsymbol{0},\boldsymbol{0})
=
\operatorname{Tr}\!\left[\ket{\psi_0}\bra{\psi_0} C_{j,k}\right]
=\frac{1}{2},
\]
since $\bra{+}^{\otimes q}Z_jZ_k\ket{+}^{\otimes q}=0$ (see also Eq.~\ref{eq:qaoa_problem_hamiltonian}). Hence, for every graph $G$ with $m$ edges,
\begin{equation}
F_p
(G,\boldsymbol{0},\boldsymbol{0})
=
-\sum_{g} w_g(G) f_{p,g}(\boldsymbol{0},\boldsymbol{0})
=
-\frac{1}{2}\sum_{g} w_g(G)
=
-\frac{m}{2},
\label{eq:initial_value_minus_m_over_2}
\end{equation}
using $\sum_g w_g(G)=m$.  
By assuming monotonicity in expectation along the iterative process, we get that for any time step $t$ it holds that
\begin{equation}
\label{eq:qaoa_cost_m_div_2}
\mathbb{E}_{\hat{g}^t}\left[\,\bigl\vert\langle C_p\rangle\bigr\vert\,\right]=-\mathbb{E}_{\hat{g}^t}\!\left[F_p(\boldsymbol{\gamma}^{t},\boldsymbol{\beta}^{t})\right]\;\ge\;
-\;F_p(\gamma^0,\beta^0) =
\;
-\;F_p(\boldsymbol{0},\boldsymbol{0})=\frac{m}{2},
\end{equation}
corresponding to $\kappa=1/2$.
\paragraph{Reasoning II: Benjamini--Schramm local limit}
Following the review of \BS convergence in Sec.~\ref{subsec:benjamini_schramm_convergence}, here we elaborate on the edge-rooted formulation required below.
In particular, for every fixed radius $p$ and every rooted edge neighborhood type $g_0$, edge rooted \BS convergence implies that the finite-size probabilities converge as $m\to\infty$ \cite{benjamini_schramm_2001}.
Let $G_m$ denote a random graph instance drawn from a graph ensemble with $m$ edges, let $\mathbb{E}_{G}$ denote expectation over its law, and let $e_m$ be a uniformly sampled random edge of $G_m$. We write $B_p(e_m;G_m)$ for the \emph{radius-$p$ edge-neighborhood} of that sampled edge, namely the rooted subgraph induced by all vertices at graph distance at most $p$ from either endpoint.
Then for any fixed $p$ and rooted edge-neighborhood type $g_0$, we define the exact finite-size probability by
\begin{equation}\label{eq:benjamini_schram_probability_convergence}
\rho_m(g_0):=\Pr\bigl(B_p(e_m;G_m)\cong g_0\bigr)\in[0,1],
    \qquad
    \rho(g_0):=\lim_{m\to\infty}\rho_m(g_0).
\end{equation}
Thus, $\rho_m(g_0)$ is the exact finite-size probability that a uniformly sampled edge has a rooted neighborhood of type $g_0$, and edge-rooted \BS convergence means precisely that $\rho_m(g_0)\to \rho(g_0)$ as $m\to\infty$. In the present reasoning, we choose $g_0$ so that $\rho(g_0)>0$. \\ 
Using the decomposition of the QAOA cost into rooted neighborhood types, the cost on a graph realization $G_m$ can be written as
\[
\FpThetaGm
=
-\sum_{g} w_g(G_m)\, f_{p,g}(\vec{\theta}),
\]
where $w_g(G_m)\ge 0$ denotes the number of edges in $G_m$ whose radius-$p$ neighborhood has type $g$, and $0\le f_{p,g}(\vec{\theta})\le 1$ for all $g$.
Fix parameters $\vec{\theta}$ and assume that for some distinguished type $g_0$ one has $f_{p,g_0}(\vec{\theta})=c>0$, with $c$ independent of $m$, then
\[
\FpThetaGm
\le
-w_{g_0}(G_m)\,f_{p,g_0}(\vec{\theta})
=
-c\,w_{g_0}(G_m).
\]
For a fixed realization of $G_m$, since $e_m$ is sampled uniformly from the $m$ edges,
\[
\Pr\bigl(B_p(e_m;G_m)\cong g_0 \mid G_m\bigr)=\frac{w_{g_0}(G_m)}{m}.
\]
Taking expectation over the graph ensemble gives the exact finite-size identity
\[
\mathbb{E}_{G}\!\left[w_{g_0}(G_m)\right]
=
m\,\rho_m(g_0).
\]
Averaging the bound $\FpThetaGm \le -c\,w_{g_0}(G_m)$ over the graph ensemble gives
\[
\mathbb{E}_{G}\!\left[\abs{\FpThetaGm}\right]
=
-\mathbb{E}_{G}\!\left[\FpThetaGm\right]
\ge
c\,\mathbb{E}_{G}\!\left[w_{g_0}(G_m)\right]
=
c\,\rho_m(g_0)\,m.
\]
This lower bound supports the same extensive scaling as above, but through a different averaging mechanism: in Reasoning~I the expectation is over the optimization randomness for a fixed graph, whereas here it is over the graph ensemble for fixed parameters.
Since $\rho_m(g_0)\to \rho(g_0)>0$, there exists $m_0$ such that $\rho_m(g_0)\ge \rho(g_0)/2$ for all $m\ge m_0$. Therefore, for all sufficiently large $m$,
\[
\mathbb{E}_{G}\!\left[\abs{\FpThetaGm}\right] \ge \frac{c\,\rho(g_0)}{2}\,m.
\]
Moreover, if one can identify a rooted edge-neighborhood type $g_0$ and a constant $\rho_\star>0$ such that $\rho_m(g_0)\ge \rho_\star$ for all $m$ under consideration, then the same computation gives the finite-size bound $\mathbb{E}_{G}\!\left[\abs{\FpThetaGm}\right]\ge c\,\rho_\star\,m$ for every such $m$. 
Reasoning II thus provides an average-case asymptotic support for the same extensive scaling on graph families such as random $v$-regular graphs and sparse Erd\H{o}s--R\'enyi graphs, whose edge-rooted local distributions converge as reviewed in Sec.~\ref{subsec:benjamini_schramm_convergence}; under the stronger condition $\rho_m(g_0)\ge \rho_\star>0$, it gives an $m$-independent lower bound for every $m$ in expectation over the graph ensemble. 
For example, let $g_0$ be the $v$-regular tree-like radius-$p$ neighborhood of an edge. If a finite $v$-regular graph $G$ has girth larger than $2p+1$, then every edge of $G$ has neighborhood type $g_0$. Hence $\rho_m(g_0)=1$, and the finite-size bound holds with $\rho_\star=1$.
\paragraph{Reasoning III: Edge contribution lower bound}
We now provide a complementary local argument for identifying edge-neighborhood types $g_0$ which  yield a strictly positive contribution $f_{p,g_0}(\vec{\theta})>0$. This completes the remaining input required for Reasoning II: once such a type appears with positive asymptotic density, the extensive lower bound follows from the decomposition in Eq.~\eqref{eq:qaoa_sum_over_subgraphs}. 
\begin{lemma}[Single-qubit expectation and distant two-qubit factorization]
\label{lem:far_vertex_decorrelation}
Let $G=(V,E)$ with $N=\vert V \vert$ nodes and $m=\vert E \vert$ edges, and let $\ket{\psi_p(\vec{\theta})}$ be the depth-$p$ QAOA state defined in Eq.~\eqref{eq:psi_gamma_beta}. For every vertex $r\in V$, the single-qubit $Z$ expectation vanishes:
\begin{equation}
\label{eq:single_Z_exp_vanishes}
\bra{\psi_p(\vec{\theta})}Z_r\ket{\psi_p(\vec{\theta})}=0. 
\end{equation}
Moreover, if two vertices $r,s\in V$ have a graph distance - defined as the shortest, or geodesic, path between them - larger than $2p$, then their two-qubit $Z$ correlator factorizes and vanishes:
\begin{equation} 
\label{eq:two_ZZ_exp_vanishes}
\bra{\psi_p(\vec{\theta})}Z_rZ_s\ket{\psi_p(\vec{\theta})}
=
\bra{\psi_p(\vec{\theta})}Z_r\ket{\psi_p(\vec{\theta})}
\bra{\psi_p(\vec{\theta})}Z_s\ket{\psi_p(\vec{\theta})}
=0.
\end{equation}
\end{lemma}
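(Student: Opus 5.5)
For the first claim, \eqref{eq:single_Z_exp_vanishes}, I will use the global bit-flip symmetry $P=\prod_{i\in V}X_i$. The initial state satisfies $P\ket{+}^{\otimes N}=\ket{+}^{\otimes N}$. $P$ commutes with the mixer $B=\sum_i X_i$, and it commutes with every $Z_jZ_k$ because it flips both factors. Hence $P$ commutes with the cost Hamiltonian $C$ of Eq.~\eqref{eq:qaoa_problem_hamiltonian}, and therefore with every layer $U(C,\gamma_l)$ and $U(B,\beta_l)$. It follows that $P\ket{\psi_p(\vec{\theta})}=\ket{\psi_p(\vec{\theta})}$. Since $PZ_rP=-Z_r$,
\[
\bra{\psi_p}Z_r\ket{\psi_p}=\bra{\psi_p}PZ_rP\ket{\psi_p}=-\bra{\psi_p}Z_r\ket{\psi_p},
\]
and so the expectation vanishes. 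This holds for every graph, depth and parameter choice.

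For the second claim I will work in the Heisenberg picture and reuse the light-cone argument behind Eq.~\eqref{eq:uncorrelated_distinct_cost}. Write $U=U(B,\beta_p)\cdots U(C,\gamma_1)$ and track the support of $U^\dagger Z_r U$ one layer at a time, from the last layer backward.
\begin{itemize}
\item A mixer layer is a product of single-qubit rotations, so conjugating by it leaves the support unchanged.
\item A cost layer is a product of commuting $e^{-i\gamma Z_aZ_b}$ factors. Factors not touching the current support commute through and cancel. Factors touching it can enlarge the support by at most one graph step.
\end{itemize}
Induction over the $p$ cost layers shows that $U^\dagger Z_r U$ is supported on the ball of radius $p$ around $r$. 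The same holds for $U^\dagger Z_s U$ with the ball around $s$. If the distance between $r$ and $s$ exceeds $2p$, the two balls are vertex-disjoint, because a shared vertex would give a path of length at most $2p$.

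Since $U^\dagger Z_rZ_sU=(U^\dagger Z_rU)(U^\dagger Z_sU)$ and the initial state is a product state over the two disjoint balls and the rest of the graph, the expectation factorizes exactly as in Eq.~\eqref{eq:uncorrelated_distinct_cost}. This gives the product of the single-qubit expectations, and by the first part each factor is zero.

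The main obstacle is the support-growth step. One must check that conjugating an operator supported on a set $S$ by $e^{-i\gamma C}$ yields an operator supported on $S$ together with its neighbours. I would handle this by splitting $e^{-i\gamma C}=\prod_{(a,b)\in E}e^{-i\gamma Z_aZ_b}$ and noting two things. First, the factors with both endpoints outside $S$ commute with the operator and cancel. Second, each remaining factor has at least one endpoint in $S$, so the other endpoint lies within distance $1$ of $S$. The distance threshold $2p$ matches this radius-$p$ growth per vertex; it is looser than the paper's edge-cone convention but exactly what the statement requires.
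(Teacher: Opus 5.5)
Your proposal is correct and takes essentially the same route as the paper. The global bit-flip symmetry $\prod_i X_i$ forces $\bra{\psi_p(\vec{\theta})}Z_r\ket{\psi_p(\vec{\theta})}=0$, and for $\mathrm{dist}(r,s)>2p$ the disjoint radius-$p$ light cones let the correlator factorize over the product initial state as in Eq.~\eqref{eq:uncorrelated_distinct_cost}; your Heisenberg-picture induction on the support of $U^\dagger Z_r U$ just spells out the light-cone step the paper only cites (your factorization $e^{-i\gamma C}=\prod e^{-i\gamma Z_aZ_b}$ holds only up to a global phase and a rescaled angle under the paper's convention for $C$, which does not affect the support argument).
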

\begin{proof}
Consider the global bit-flip operator
\[
X_{\mathrm{all}}:=\prod_{i=1}^NX_i .
\]
The QAOA state is invariant under this global bit-flip symmetry. To see that, note that $X_{\mathrm{all}}$ commutes with the mixer Hamiltonian and with every MaxCut term in the cost Hamiltonian, since for every edge $(j,k)\in E$,
\[
X_{\mathrm{all}} Z_jZ_k X_{\mathrm{all}}=Z_jZ_k.
\]
Moreover, $X_{\mathrm{all}}\ket{+}^{\otimes N}=\ket{+}^{\otimes N}$, hence
\[
X_{\mathrm{all}}\ket{\psi_p(\vec{\theta})}
=
\ket{\psi_p(\vec{\theta})}.
\]
Using $X_{\mathrm{all}}Z_rX_{\mathrm{all}}=-Z_r$, we obtain
\[
\bra{\psi_p(\vec{\theta})}Z_r\ket{\psi_p(\vec{\theta})}
=
\bra{\psi_p(\vec{\theta})}
X_{\mathrm{all}}Z_rX_{\mathrm{all}}
\ket{\psi_p(\vec{\theta})}
=
-\bra{\psi_p(\vec{\theta})}Z_r\ket{\psi_p(\vec{\theta})},
\]
and therefore
\[
\bra{\psi_p(\vec{\theta})}Z_r\ket{\psi_p(\vec{\theta})}=0,\qquad\forall r \in V.
\]
If the graph distance between $r$ and $s$ is larger than $2p$, then the depth-$p$ reverse causal cones of $Z_r$ and $Z_s$ are disjoint. By the same causal-cone factorization argument used for disjoint edge costs in Eq.~\eqref{eq:uncorrelated_distinct_cost}, now applied to the single-qubit observables $Z_r$ and $Z_s$, the two-qubit expectation factorizes into the product of the single-qubit expectations above, and therefore vanishes.
\end{proof}
We next use this decorrelation property to show that the local cost contributions $f_{p,jk}$ along a long geodesic path cannot vanish simultaneously. Here, a 'zero contribution' corresponds to $f_{p,jk}(\vec{\theta})=0$, or equivalently, the vanishing of the expectation value $\bra{\psi_p(\vec{\theta})}C_{jk}\ket{\psi_p(\vec{\theta})} = 0$.
\begin{lemma}[A long geodesic path cannot be zero on every edge]
\label{lem:long_path_positive_edge}
Let
\[
P=(v_0,v_1,\ldots,v_\ell)
\]
be a geodesic path in $G$ with $\ell>2p$ edges. Then, for every fixed parameter vector $\vec{\theta}$, at least one edge in $P$ has strictly positive QAOA MaxCut contribution:
\[
\exists\, i\in\{1,\ldots,\ell\}
\quad\text{such that}\quad
f_{p,v_{i-1}v_i}(\vec{\theta})>0.
\]
\end{lemma}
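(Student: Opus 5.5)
We argue by contradiction, using only the definition $f_{p,jk}=\langle \tfrac12(1-Z_jZ_k)\rangle_{\psi_p(\vec\theta)}$ and Lemma~\ref{lem:far_vertex_decorrelation}. Suppose that every edge of the geodesic $P$ has zero contribution, i.e.\ $f_{p,v_{i-1}v_i}(\vec\theta)=0$ for all $i=1,\ldots,\ell$. Since $C_{jk}=\tfrac12(1-Z_jZ_k)$ is a projector, and hence positive semidefinite, $f_{p,jk}=\|C_{jk}\ket{\psi_p(\vec\theta)}\|^2$. Vanishing of $f$ therefore means $C_{v_{i-1}v_i}\ket{\psi_p}=0$, i.e.\ $Z_{v_{i-1}}Z_{v_i}\ket{\psi_p}=\ket{\psi_p}$ for every $i$. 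In words, the state lies in the $+1$ eigenspace of each $Z_{v_{i-1}}Z_{v_i}$ along the path.

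Next we chain these eigen-relations. Because all $Z$ operators commute and $Z^2=1$, the product telescopes:
\[
Z_{v_0}Z_{v_\ell}=\prod_{i=1}^{\ell} Z_{v_{i-1}}Z_{v_i}.
\]
Applying the relations one factor at a time gives $Z_{v_0}Z_{v_\ell}\ket{\psi_p(\vec\theta)}=\ket{\psi_p(\vec\theta)}$, and hence $\bra{\psi_p}Z_{v_0}Z_{v_\ell}\ket{\psi_p}=1$.

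Now we use the geodesic assumption. Since $P$ is a geodesic with $\ell>2p$ edges, the graph distance between $v_0$ and $v_\ell$ equals $\ell>2p$. Eq.~\eqref{eq:two_ZZ_exp_vanishes} of Lemma~\ref{lem:far_vertex_decorrelation} then gives $\bra{\psi_p}Z_{v_0}Z_{v_\ell}\ket{\psi_p}=0$, contradicting the value $1$ obtained above. So some edge of $P$ has $f_{p,v_{i-1}v_i}(\vec\theta)>0$.

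The only delicate step is upgrading the scalar condition $f=0$ to the operator identity $Z_{v_{i-1}}Z_{v_i}\ket{\psi_p}=\ket{\psi_p}$; this rests on $C_{jk}$ being a projector, so that a zero expectation forces the vector itself to be annihilated. It is also essential that $P$ be a geodesic rather than an arbitrary path, since Lemma~\ref{lem:far_vertex_decorrelation} is stated in terms of graph distance, and this is exactly what the hypothesis supplies. The argument holds for every fixed $\vec\theta$ and needs no further structure of the graph.
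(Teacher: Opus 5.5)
Your proof is correct and follows the paper's argument essentially step for step: the projector property turns $f_{p,v_{i-1}v_i}(\vec{\theta})=0$ into $Z_{v_{i-1}}Z_{v_i}\ket{\psi_p}=\ket{\psi_p}$, the telescoping product gives $\bra{\psi_p}Z_{v_0}Z_{v_\ell}\ket{\psi_p}=1$, and the geodesic hypothesis (distance $\ell>2p$ between $v_0$ and $v_\ell$) contradicts Lemma~\ref{lem:far_vertex_decorrelation}. Your explicit identity $f_{p,jk}=\|C_{jk}\ket{\psi_p}\|^2$ spells out the one step the paper states without justification.
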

\begin{proof}
Write $\ket{\psi_p}=\ket{\psi_p(\vec{\theta})}$. Assume, for contradiction, that every edge along the path has a zero contribution, i.e.\ $\bra{\psi_p}C_{v_{i-1}v_i}\ket{\psi_p}=0$ for all $i$. Since each $C_{v_{i-1}v_i}$ is a positive semidefinite projector, this implies
\[
C_{v_{i-1}v_i}\ket{\psi_p}=0
\qquad\text{for all } i=1,\ldots,\ell.
\]
Using $C_{uv}=(1-Z_uZ_v)/2$, this is equivalent to
\[
Z_{v_{i-1}}Z_{v_i}\ket{\psi_p}=\ket{\psi_p}
\qquad\text{for all } i=1,\ldots,\ell.
\]
Multiplying these parity constraints along the path, all intermediate $Z_{v_i}^2$ factors cancel, yielding
\[
Z_{v_0}Z_{v_\ell}\ket{\psi_p}=\ket{\psi_p}.
\]
Thus $\bra{\psi_p}Z_{v_0}Z_{v_\ell}\ket{\psi_p}=1$. Since the path is geodesic and $\ell>2p$, this contradicts Lemma~\ref{lem:far_vertex_decorrelation}. Therefore at least one edge on the path must have a positive contribution.
\end{proof}
For bounded-degree graph families and fixed $p$, Lemma~\ref{lem:long_path_positive_edge} already suggests a path-level route to cost extensivity whenever the graph contains linearly many edge-disjoint geodesic paths of length $\ell>2p$.
While Lemma~\ref{lem:long_path_positive_edge} gives positivity \textit{somewhere} along a long geodesic path, Lemma~\ref{lem:no_saturation_edge_type} turns this into a statement about an edge-neighborhood type. Namely, if a radius-$p$ edge-neighborhood type $g_0$ can be realized on every edge of a sufficiently long geodesic path, then this type cannot have zero contribution.
To that end, we use the locality relation in Eq.~\eqref{eq:qaoa_subgraph_cost2} and recall from Sec.~\ref{secsec:qaoa} that the expectation value of an edge $f_{p,jk}$ in a $p$-layer QAOA MaxCut circuit depends solely on its reverse causal cone (RCC), namely the edge-neighborhood $g_p(j,k)$ comprising all nodes and edges within distance $p$ of that edge.
\begin{lemma}[Strict positivity of repeatable edge-neighborhood types along a long geodesic]
\label{lem:no_saturation_edge_type}
Fix $g_0$ to be a radius-$p$ edge-neighborhood type that represents a specific RCC. Suppose there exists a graph $G$ containing a geodesic path $P=(v_0,\ldots,v_\ell)$ with $\ell>2p$ such that every edge on the path shares this local structure $g_0$
\[
g_p(v_{i-1},v_i)=g_0
\qquad 
\forall i=1,\ldots,\ell.
\]
Then, for every fixed parameter vector $\vec{\theta}$,
\[
f_{p,v_{i-1},v_i}(\vec{\theta})=f_{p,g_0}(\vec{\theta})>0 \qquad \forall i=1,\ldots,\ell.
\]
\end{lemma}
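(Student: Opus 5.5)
Two ingredients suffice: the locality relation Eq.~\eqref{eq:qaoa_subgraph_cost2} and Lemma~\ref{lem:long_path_positive_edge}. Fix $\vec{\theta}$ and take the graph $G$ and the geodesic path $P=(v_0,\ldots,v_\ell)$, $\ell>2p$, supplied by the hypothesis.

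\textbf{Step 1 (all path edges carry the same value).} By Eq.~\eqref{eq:qaoa_subgraph_cost} each $f_{p,v_{i-1}v_i}(\vec{\theta})$ depends only on the rooted radius-$p$ neighborhood $g_p(v_{i-1},v_i)$. By hypothesis this neighborhood is isomorphic to $g_0$ for every $i$, so Eq.~\eqref{eq:qaoa_subgraph_cost2} gives
\[
f_{p,v_{i-1}v_i}(\vec{\theta})=f_{p,g_0}(\vec{\theta}),\qquad i=1,\ldots,\ell .
\]
Thus the $\ell$ edge contributions along $P$ are one common number.

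\textbf{Step 2 (this number is positive).} Since $f_{p,g_0}(\vec{\theta})\ge 0$, it is enough to rule out equality with zero. If $f_{p,g_0}(\vec{\theta})=0$, Step~1 would make every edge of $P$ have zero contribution. But $P$ is a geodesic path with $\ell>2p$ edges, so Lemma~\ref{lem:long_path_positive_edge} guarantees some edge of $P$ has strictly positive contribution. This contradiction gives $f_{p,g_0}(\vec{\theta})>0$, and by Step~1 the same holds for every edge of $P$.

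\textbf{Main obstacle: justifying Step~1.} The isomorphism of $g_p(v_{i-1},v_i)$ with $g_0$ must be an isomorphism of \emph{edge-rooted} neighborhoods, i.e.\ one that maps the root edge to the root edge. Otherwise Eq.~\eqref{eq:qaoa_subgraph_cost} need not evaluate the same observable $C_{jk}$. Orientation of the root edge is harmless, because $C_{jk}=C_{kj}$ and the circuit is invariant under relabeling of qubits.

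I would also verify that the restricted causal-cone expression in Eq.~\eqref{eq:qaoa_subgraph_cost} agrees with the full-graph expectation, so that equal local types really give equal global values $f_{p,v_{i-1}v_i}$. This is the light-cone argument already used in Eq.~\eqref{eq:uncorrelated_distinct_cost}: gates outside the cone commute past $C_{jk}$ and cancel.

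Once Step~1 is secured, Step~2 is immediate. The conclusion is independent of $\vec{\theta}$ because Lemma~\ref{lem:long_path_positive_edge} holds for every parameter vector.
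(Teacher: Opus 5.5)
Your proof is correct and follows the paper's argument. The paper likewise assumes $f_{p,g_0}(\vec{\theta})=0$, uses the locality relation Eq.~\eqref{eq:qaoa_subgraph_cost2} to force every edge of $P$ to have zero contribution, and then contradicts Lemma~\ref{lem:long_path_positive_edge}; your remarks on edge-rooted isomorphism and the light-cone restriction just make explicit what the paper takes for granted.
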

\begin{proof}
Suppose, to the contrary, that $f_{p,g_0}(\vec{\theta})=0$. Then the locality statement in Eq.~\eqref{eq:qaoa_subgraph_cost2} implies that every edge on $P$ would yield a zero contribution, contradicting Lemma~\ref{lem:long_path_positive_edge}. 
\end{proof}
Consequently, by locality, \textit{every} edge in \textit{any} graph whose radius-$p$ edge-neighborhood type has the same type $g_0$ as in Lemma~\ref{lem:no_saturation_edge_type} has the same strictly positive contribution. The geodesic path in the lemma is used only as a witness certifying positivity of this local type.
\paragraph{Extensive consequence for locally tree-like families}
Lemma~\ref{lem:no_saturation_edge_type} provides the missing positivity condition used in Reasoning II for graphs satisfying the \BS convergence. If a graph family contains a radius-$p$ edge-neighborhood (RCC) type $g_0$ that can be repeated along a geodesic path of length greater than $2p$ then Lemma~\ref{lem:no_saturation_edge_type} ensures $f_{p,g_0}(\vec{\theta}) > 0$. In this case, Eq.~\eqref{eq:qaoa_sum_over_subgraphs} gives
\[
\bigl\vert \FpThetaGm \bigr \vert
\;=\;
\sum_g w_g(G_m) f_{p,g}(\vec{\theta})
\ge
w_{g_0}(G_m)f_{p,g_0}(\vec{\theta}) \gt 0.
\]
Thus, whenever type $g_0$ appears with positive edge density, leading to $w_{g_0}(G_m)=\Theta(m)$, Assumption~\ref{ass:qaoa_cost_lower_bound} holds with a constant proportional to $f_{p,g_0}(\vec{\theta})$.

\textbf{In the case of $v$-regular locally tree-like graphs}, take $g_0$ to be the radius-$p$ edge-neighborhood in the infinite $v$-regular tree. This type satisfies Lemma~\ref{lem:no_saturation_edge_type}: along any sufficiently long geodesic in the infinite $v$-regular tree, every edge has the same radius-$p$ neighborhood type $g_0$. Therefore $f_{p,g_0}(\vec{\theta})>0$. If a finite $v$-regular graph contains this regular tree RCC $g_0$ on a positive fraction of its edges, then
\[
\bigl\vert \FpThetaG \bigr \vert
\ge
f_{p,g_0}(\vec{\theta})\,w_{g_0}(G).
\]
In particular, for uniformly random $v$-regular graphs and fixed $p$, \BS convergence implies that the fraction of edges with this regular tree structure tends to one with high probability. Hence
\[
\bigl\vert \FpThetaG \bigr\vert
\ge
f_{p,g_0}(\vec{\theta})\,(1-o(1))\,m.
\]
\textbf{In the case of sparse \ER graphs} $G(n,v/n)$ with fixed $v>0$, the edge-rooted local neighborhoods converge in the \BS sense to the corresponding Poisson Galton--Watson edge-rooted tree. Any fixed finite tree RCC that occurs in this local limit has a strictly positive limiting probability $\rho(g_0)>0$. More concretely, let $g_0$ be the radius-$p$ edge-neighborhood of the central edge in the infinite path. In the edge-rooted Poisson Galton--Watson limit, this event has positive probability; with the induced radius-$p$ neighborhood convention used above, $\rho(g_0)=(v e^{-v})^{2p}>0$, since on each side of the root edge the first $p$ forward offspring numbers must equal one. Since $g_0$ can be repeated along any sufficiently long geodesic path, Lemma~\ref{lem:no_saturation_edge_type} ensures $f_{p,g_0}(\vec{\theta})>0$. 
Reasoning II then gives
\[
\mathbb{E}_{G}\!\left[\bigl\vert \FpThetaGm \bigr\vert\right]
\ge
f_{p,g_0}(\vec{\theta})\,\bigl(\rho(g_0)-o(1)\bigr)\,m
=
f_{p,g_0}(\vec{\theta})\,\bigl((v e^{-v})^{2p}-o(1)\bigr)\,m.
\]
Thus both locally tree-like $v$-regular graphs and sparse \ER graph families contain positive-density edge neighborhood types whose contribution is strictly positive by Lemma~\ref{lem:no_saturation_edge_type}, supporting the extensive scaling in Assumption~\ref{ass:qaoa_cost_lower_bound}.

Finally, we conjecture that the same positivity mechanism extends beyond the repeatable-path types covered above. In particular, an edge-neighborhood type should have strictly positive contribution whenever one can move $p$ steps away from the root edge along a simple path and reach an open end.
\begin{conjecture}[Open-ended edge-neighborhood positivity]
\label{conj:open_ended_edge_positive}
Fix $p\ge1$ and a radius-$p$ edge-neighborhood type $g$ with root edge $e=(a,b)$. Suppose that one can leave the root edge and follow a simple path of length $p$,
\[
u_0,u_1,\ldots,u_p,
\]
where $u_0\in\{a,b\}$, $(u_0,u_1)\neq e$, the vertices $u_0,\ldots,u_p$ are all distinct, and $u_p$ has degree one in $g$. Then, for every QAOA parameter vector $\vec{\theta}$,
\[
f_{p,g}(\vec{\theta})>0.
\]
\end{conjecture}
\section{QAOA Concentration bound} \label{appendix:qaoa_concentration_bound}
For completeness, we restate \textit{Lemma} \ref{lem:qaoa_concentration_inequality}.\\
Let $\langle C_p \rangle$ be the expectation of the cost for a $p$-depth QAOA circuit, and let $\langle \hat{C}_p \rangle$ denote its estimator. Let $m$ be the number of edges in the graph, $n_p$ the number of circuit measurements, and $\rvpsym$ the bound from Eq.~\eqref{eq:h_vp_definition}. Then, for any $\delta > 0$, we have:
\begin{equation}
    \Pr\bigl(\Bigm\lvert\langle \hat{C_p} \rangle - \langle C_p \rangle \Bigm\rvert \gt \delta\costexpect\bigr) \leq 2\exp\Bigl(\frac{-2n\delta^2{\langle C_p \rangle}^2}{(2\rvpsym+1) m}\Bigr).
\end{equation}
\begin{proof}
    First, we apply Janson's bound from Eq.~\eqref{eq:janson_bound}:
    \begin{equation}
        \Pr\bigl(\Bigm\lvert X-\mu_X\Bigm\rvert \gt t\bigr) \leq 2\exp\Bigl(\frac{-2t^2}{(\Lambda + 1)\sum\limits_{\alpha \in \mathcal{A}}(b_\alpha-a_\alpha)^2}\Bigr),
    \end{equation}
    This is essentially a Chernoff--Hoeffding bound with a penalty factor $\Lambda + 1$ in the denominator. The parameter $\Lambda$ is not the variance; it is the maximum degree of the dependency graph of the summands. Thus, dependence enters the bound by widening the Hoeffding concentration scale, through the factor $(\Lambda+1)\sum_{\alpha \in \mathcal{A}}(b_\alpha-a_\alpha)^2$.
    \medskip
    \noindent\textbf{Replace dependency graph with QAOA cone causality.}
    Let $\alpha=(i,(j,k))$ index the contribution of edge $(j,k)\in E$ in the $i$th circuit measurement, and define
    $Y_\alpha=C_{jk}^{(i)}=\frac{1}{2}(1-Z_jZ_k)$. Then
    \[
    X=\sum_{i=1}^n\sum_{(j,k)\in E} C_{jk}^{(i)}
    \]
    is the sum of all measured edge contributions over the $n$ shots. Since every edge contribution is bounded as $0\le C_{jk}^{(i)}\le 1$, Janson's bound gives
    \begin{equation}
        \Pr\bigl(\Bigm\lvert X-\mu_X\Bigm\rvert \gt t\bigr) \leq 2\exp\left(\frac{-2t^2}{(\Lambda+1)\sum\limits_{i=1}^n\sum\limits_{(j,k)\in E}(1-0)^2}\right).
    \end{equation}
    \\\\
    In QAOA MaxCut, two edge-cost measurement outcomes $C_{jk}$ and $C_{j^\prime k^\prime}$ are independent whenever their corresponding depth-$p$ causal cones, denoted by $g(j,k)$ and $g(j^\prime,k^\prime)$, are vertex-disjoint (see Eq.~\eqref{eq:qaoa_statistical_independence}). Therefore, the dependency graph for the edge measurements has maximum degree bounded by the number of edge terms whose causal cones can overlap a given edge's causal cone. Using the same locality counting as in Eq.~\eqref{eq:variance_upper_bound} from \cite{farhi2014quantum}, this gives $\Lambda \le 2\rvpsym$, and hence
    \[
        \Pr\bigl(\Bigm\lvert X-\mu_X\Bigm\rvert \gt t\bigr) \leq 2\exp\Bigl(\frac{-2t^2}{(2\rvpsym+1) n m}\Bigr),
    \]
    The cost estimator is then defined as $\hat{x}_n=\frac{X}{n}$, and its expectation is $\mu = \frac{\mu_X}{n}$. The event $\lvert \hat{x}_n-\mu\rvert>\delta\,\abs{\mu}$ is equivalent to $\lvert X-\mu_X\rvert>n\delta\,\abs{\mu}$, so we set $t=n\delta\,\abs{\mu}$ and obtain
    \[
        \Pr\bigl(\Bigm\lvert X-\mu_X\Bigm\rvert \gt n\delta\,\abs{\mu}\bigr) \leq 2\exp\Bigl(\frac{-2n\delta^2 \mu^2}{(2\rvpsym+1) m}\Bigr),
    \]
    which is the same as 
    \[
        \Pr\bigl(\Bigm\lvert\hat{x}_n-\mu \Bigm\rvert \gt \delta\,\abs{\mu}\bigr) \leq 2\,exp\Bigl(\frac{-2n\delta^2 \mu^2}{(2\rvpsym+1) m}\Bigr). 
    \]
    We then rewrite in terms of cost estimator and expectation
    \[
    \Pr\bigl(\Bigm\lvert\langle \hat{C_p} \rangle - \langle C_p \rangle \Bigm\rvert \gt \delta\costexpect\bigr) \leq 2\exp\Bigl(\frac{-2n\delta^2{\langle C_p \rangle}^2}{(2\rvpsym+1) m}\Bigr).
    \]
\end{proof}
\section{QAOA convergence in a stochastic gradient descent process}
\subsection{Proving Lemma \ref{lem:relative_sgd_convergence}}
\label{app:lem_relative_sgd_convergence_proof}
We measure relative performance w.r.t.\ $\abs{F(\vec{\theta}^*)}$ to avoid sign issues since in our convention $F$ is typically non-positive.
For completeness, we restate Lemma~\ref{lem:relative_sgd_convergence}:
\medskip
Let $F(\vec{\theta})$ be a cost function with an $L$-Lipschitz gradient that satisfies the PL-inequality for some $\mu > 0$ (see Eq.~\eqref{eq:PL_inequality}). Consider SGD with learning rate $\alpha = 1/L$. Then, any iteration $t$ satisfies
\begin{equation*}
    d_t \,\bigl\vert F(\vec{\theta}^*)\bigr\vert
    \;\leq\;
    d_0\left(1-\frac\mu L\right)^t
    \bigl\vert F(\vec{\theta}^*)\bigr\vert
    \;+\;
    \xi \,\bigl\vert F(\vec{\theta}^*)\bigr\vert,
\end{equation*}
where $d_t$ is the relative cost gap at iteration $t$, as defined in Eq.~\eqref{eq:d_t_relative_cost_gap}, and $\xi$ is the \emph{relative stochastic error floor}, defined as
\begin{equation*}
   \xi \;\overset{\text{def}}{=}\; \frac{\sigma^2}{2\mu \,\bigl\vert F(\vec{\theta}^*)\bigr\vert},
\end{equation*}
where $\sigma^2$ is an upper bound on the mean-squared error of the gradient estimate
\begin{equation*}
    \mathbb{E}\!\left[\bigl\|\hat g^{\,t}-\nabla F(\vec{\theta}^{\,t})\bigr\|^2\right] \leq \sigma^2.
\end{equation*}

\medskip
\begin{proof}
We use $\mathbb{E}_t[\cdot] := \mathbb{E}[\cdot \mid \vec{\theta}^{\,t}]$ for conditional expectation over the probability distribution of $\vec{\theta}$ at time $t$.
Define the gradient-estimation error
\[
\varepsilon^{\,t} \;:=\; \hat g^{\,t} - \nabla F(\vec{\theta}^{\,t}).
\]
Notice that $\varepsilon^{\,t}$ may be \emph{biased}, i.e.\ $\mathbb{E}_t[\varepsilon^{\,t}] \neq 0$. For finite-difference gradients this bias corresponds to truncation error (scaling as $\mathcal{O}(h^k)$), and its contribution is included in the MSE via $\mathbb{E}\|\varepsilon^{\,t}\|^2$.
In the main text we choose $h$ so that this bias contribution is negligible, but the bound below does not require unbiasedness.
\smallskip

\noindent\textbf{One-step progress via $L$ smoothness.}
In optimization theory, the \textit{descent lemma}, see e.g.\   \cite{nesterov2004introductory},  states that for $L$-Lipschitz continuous gradient function $F$, as defined in Eq.~\eqref{eq:lipschitz_continous}, 
the following holds for any   $\vec{x},\vec{y} \in \mathbb{R}^n$
\[
F(\vec{y}) \le F(\vec{x}) + \langle \nabla F(\vec{x}), \vec{y}-\vec{x}\rangle + \frac{L}{2}\|\vec{y}-\vec{x}\|^2.
\]
Assign $\vec{x}=\vec{\theta}^{\,t}$ and $\vec{y}=\vec{\theta}^{\,t+1}=\vec{\theta}^{\,t}-\alpha \hat g^{\,t}$:
\[
F(\vec{\theta}^{\,t+1})
\le
F(\vec{\theta}^{\,t})
- \alpha \langle \nabla F(\vec{\theta}^{\,t}), \hat g^{\,t}\rangle
+ \frac{\alpha^2 L}{2} \|\hat g^{\,t}\|^2.
\]
Taking $\mathbb{E}_t[\cdot]$ and substituting $\hat g^{\,t}=\nabla F(\vec{\theta}^{\,t})+\varepsilon^{\,t}$ gives
\begin{align*}
\mathbb{E}_t\!\left[F(\vec{\theta}^{\,t+1})\right]
&\le
F(\vec{\theta}^{\,t})
-\alpha\left\langle \nabla F(\vec{\theta}^{\,t}),
\nabla F(\vec{\theta}^{\,t})+\mathbb{E}_t[\varepsilon^{\,t}]\right\rangle \\
&\quad
+ \frac{\alpha^2 L}{2}\,
\mathbb{E}_t\!\left[\bigl\|\nabla F(\vec{\theta}^{\,t})+\varepsilon^{\,t}\bigr\|^2\right] \\
&=
F(\vec{\theta}^{\,t})
+\Bigl(-\alpha+\frac{\alpha^2 L}{2}\Bigr)\|\nabla F(\vec{\theta}^{\,t})\|^2 \\
&\quad
+\Bigl(-\alpha+\alpha^2 L\Bigr)\left\langle \nabla F(\vec{\theta}^{\,t}), \mathbb{E}_t[\varepsilon^{\,t}]\right\rangle
+\frac{\alpha^2 L}{2}\mathbb{E}_t\!\left[\|\varepsilon^{\,t}\|^2\right].
\end{align*}
Now set $\alpha=1/L$. Then the bias-dependent cross term cancels exactly since $-\alpha+\alpha^2 L=0$, yielding
\[
\mathbb{E}_t\!\left[F(\vec{\theta}^{\,t+1})\right]
\le
F(\vec{\theta}^{\,t})
-\frac{1}{2L}\|\nabla F(\vec{\theta}^{\,t})\|^2
+\frac{1}{2L}\mathbb{E}_t\!\left[\|\varepsilon^{\,t}\|^2\right].
\]
Using $\mathbb{E}\!\left[\|\varepsilon^{\,t}\|^2\right]=\mathbb{E}\!\left[\|\hat g^{\,t}-\nabla F(\vec{\theta}^{\,t})\|^2\right]\le\sigma^2$ gives
\[
\mathbb{E}\!\left[F(\vec{\theta}^{\,t+1})\right]
\le
\mathbb{E}\!\left[F(\vec{\theta}^{\,t})\right]
-\frac{1}{2L}\mathbb{E}\!\left[\|\nabla F(\vec{\theta}^{\,t})\|^2\right]
+\frac{\sigma^2}{2L}.
\]
\smallskip
\noindent\textbf{From gradients to function values via PL.}
Subtract $F(\vec{\theta}^*)$ and define $\Delta_t:=\mathbb{E}[F(\vec{\theta}^{\,t})]-F(\vec{\theta}^*)$.
Since the PL inequality holds pointwise,
$\|\nabla F(\vec{\theta}^{\,t})\|^2 \ge 2\mu\bigl(F(\vec{\theta}^{\,t})-F(\vec{\theta}^*)\bigr)$,
taking expectation yields
\[
\mathbb{E}\!\left[\|\nabla F(\vec{\theta}^{\,t})\|^2\right]\ge 2\mu\,\bigl(\mathbb{E}[F(\vec{\theta}^{\,t})]-F(\vec{\theta}^*)\bigr).
\]
Therefore,
\[
\mathbb{E}[F(\vec{\theta}^{\,t+1})]-F(\vec{\theta}^*)
\le
\Bigl(1-\frac{\mu}{L}\Bigr)\bigl(\mathbb{E}[F(\vec{\theta}^{\,t})]-F(\vec{\theta}^*)\bigr)
+\frac{\sigma^2}{2L}.
\]
\smallskip
\noindent\textbf{Unrolling the recursion.}
Let $r:=1-\mu/L\in[0,1)$. Iterating gives
\[
\Delta_t \le r^t \Delta_0 + \frac{\sigma^2}{2L}\sum_{k=0}^{t-1} r^k
\le r^t \Delta_0 + \frac{\sigma^2}{2L}\cdot \frac{1}{1-r}
=
r^t \Delta_0 + \frac{\sigma^2}{2\mu}.
\]
Dividing by $\abs{F(\vec{\theta}^*)}$ and using
$d_t=\Delta_t/\abs{F(\vec{\theta}^*)}$ and $\xi=\sigma^2/(2\mu\abs{F(\vec{\theta}^*)})$ yields
\[
    d_t \,\bigl\vert F(\vec{\theta}^*)\bigr\vert
    \;\leq\;
    d_0\left(1-\frac\mu L\right)^t
    \bigl\vert F(\vec{\theta}^*)\bigr\vert
    \;+\;
    \xi \,\bigl\vert F(\vec{\theta}^*)\bigr\vert,
\]
which is equivalent to the stated inequality.
\end{proof}

    \subsection{Linear scaling of the QAOA PL $\mu$ and the Lipschitz constant $L$}
    \textbf{Restating Assumption~\ref{ass:linear_scaling_mu}:}
    \newline
    \label{app:linear_scaling_mu_L}
    The expected QAOA cost at depth $p$, $F_p(\vec{\theta})$, satisfies the PL inequality (Eq.~\eqref{eq:PL_inequality}) with a constant $\mu > 0$ that scales linearly with the size of the graph:
    \begin{equation}
        \mu = \Theta(m).
    \end{equation}
    
    \medskip\noindent
    \textbf{Restating Assumption~\ref{ass:linear_scaling_L}:}
    \newline
    The expected QAOA cost at depth $p$, $F_p(\vec{\theta})$, satisfies the PL inequality (Eq.~\eqref{eq:PL_inequality}) and has an $L$-Lipschitz continuous gradient which scales linearly with the size of the graph:
    \begin{equation}
        L = \Theta(m).
    \end{equation}
    \paragraph{Reasoning: Benjamini-Schramm local limit.}
    Let $(G_m)_{m\ge 1}$ be a sequence of graphs with a fixed maximal degree, indexed by the number of edges $m:=\lvert E(G_m) \rvert$,
    and let $e_m$ be an edge uniformly sampled from $G_m$. We assume $(G_m)$ is Benjamini-Schramm
    convergent (in the edge-rooted sense, as described in Appendix~\ref{app:cost_function_scaling}, in \textit{Reasoning II}) for a fixed radius-$p$ edge-neighborhood. We utilize the depth-$p$ expected QAOA cost in the local-decomposition form of Eq.~\eqref{eq:qaoa_expectaion_edges_sum}
    \[
    F_p(\vec{\theta})\;=\;  F_p(\vec{\theta};G_m) \;=\; -\sum_{g} w_g\, f_{p,g}(\vec{\theta}),
    \]
    where $w_g=w_g(G_m)$ counts the number of radius-$p$ edge-neighborhoods of type $g$, and $f_{p,g}(\vec{\theta})$ is the corresponding local contribution. 
    
    Using the property of \BS convergence, as the graph grows, the fraction of each neighborhood type $g$ converges:
    \[
    \rho_m(g):=\frac{w_g}{m}\;
    \xrightarrow[m \to \infty]{}
    \rho(g),
    \]
    where $\rho(g)$ is the limiting probability of seeing type $g$ around a uniformly sampled edge. 
    Consequently, we define the average local cost per edge as:
    \begin{equation}
    \label{eq:averaged_per_edge_cost}
     \bar f_{p}(\vec{\theta};G_m):=\frac{1}{m}F_p(\vec{\theta};G_m)
    = -\sum_g \frac{w_g}{m}\, f_{p,g}(\vec{\theta})
    =-\sum_g \rho_m(g)\, f_{p,g}(\vec{\theta}),   
    \end{equation}
    as well as the limiting per-edge cost function based on the limiting probability:
    \[
    \bar f_{p}(\vec{\theta}):=-\sum_g \rho(g)\, f_{p,g}(\vec{\theta}).  
    \]
    Since $p$ and the maximum degree are fixed, the sum involves only finitely many neighborhood types. Moreover, since each local contribution $f_{p,g}(\vec{\theta})$ is a smooth trigonometric polynomial of the QAOA parameters, with bounded derivatives on the chosen parameter range, 
    the convergence of the densities $\rho_m(g)$ directly ensures that the per-edge cost, its gradient, and its Hessian converge to their respective limits uniformly over the chosen parameter range:
    \begin{align}
    \bar f_{p}(\vec{\theta};G_m) &\xrightarrow[m \to \infty]{}\bar f_{p}(\vec{\theta}), \nonumber \\
    \nabla \bar f_{p}(\vec{\theta};G_m) &\xrightarrow[m \to \infty]{}\nabla \bar f_{p}(\vec{\theta}), \\
    \nabla^2 \bar f_{p}(\vec{\theta};G_m) &\xrightarrow[m \to \infty]{}\nabla^2 \bar f_{p} \nonumber (\vec{\theta}).
    \end{align}
    \smallskip
    which, in terms of scaling with $m$, implies:
    \begin{equation}
    \label{eq:cost_and_grad_scaling}
    \begin{aligned}
    \bar f_p(\vec{\theta};G_m) &= \bar f_p(\vec{\theta})+o(1), \\
    \nabla \bar f_p(\vec{\theta};G_m) &= \nabla \bar f_p(\vec{\theta})+o(1), \\
    \nabla^2 \bar f_p(\vec{\theta};G_m) &= \nabla^2 \bar f_p(\vec{\theta})+o(1).
    \end{aligned}
    \end{equation}
    We now assume that the limiting per-edge objective $\bar f_p(\vec{\theta})$ is strongly convex (no flat regions) in a local optimization region $U$ around a local optimum $\vec{\theta}^*$, namely
    \[
    \nabla^2\bar f_p(\vec{\theta})\succeq \lambda_0 I,
    \qquad \vec{\theta}\in U.
    \]
    Here $\lambda_0>0$ is a lower bound on the Hessian eigenvalues in $U$, and is independent of the graph size $m$.
    By the Hessian convergence in Eq.~\eqref{eq:cost_and_grad_scaling}, for all sufficiently large $m$,
    \[
    \nabla^2\bar f_p(\vec{\theta};G_m)\succeq \frac{\lambda_0}{2}I,
    \qquad \vec{\theta}\in U.
    \]
    Therefore, $\bar f_p(\vec{\theta};G_m)$ is also strongly convex on $U$, and hence satisfies the PL inequality on $U$ with a constant $\bar\mu_m\ge\lambda_0/2=\Omega(1)$ \cite{wolf_mathematical}:
    \[
    \frac{1}{2}\|\nabla \bar f_{p}(\vec{\theta};G_m)\|_2^2
    \ge
    \bar\mu_m\big(\bar f_{p}(\vec{\theta};G_m)-\bar f_{p}(\vec{\theta}_m^\ast;G_m)\big),
    \]
    where $\vec{\theta}_m^\ast\in\arg\min_{\vec{\theta}\in U}\bar f_p(\vec{\theta};G_m)$, equivalently $\vec{\theta}_m^\ast\in\arg\min_{\vec{\theta}\in U}F_p(\vec{\theta};G_m)$. Since
    \[
    F_p(\vec{\theta};G_m)=m\,\bar f_p(\vec{\theta};G_m),
    \qquad
    \nabla F_p(\vec{\theta};G_m)=m\,\nabla \bar f_p(\vec{\theta};G_m),
    \]
    we obtain
    \begin{eqnarray*}
    \frac{1}{2}\|\nabla F_p(\vec{\theta};G_m)\|_2^2
    &=&
    \frac{m^2}{2}\|\nabla \bar f_p(\vec{\theta};G_m)\|_2^2
    \\
    &\ge&
    m^2\bar\mu_m
    \bigl(\bar f_p(\vec{\theta};G_m)-\bar f_p(\vec{\theta}_m^\ast;G_m)\bigr)
    \\
    &=&
    m\bar\mu_m
    \bigl(F_p(\vec{\theta};G_m)-F_p(\vec{\theta}_m^\ast;G_m)\bigr).
    \end{eqnarray*}
    
    Thus, the total objective $F_p(\vec{\theta};G_m)$ satisfies the PL inequality on the same local region $U$ with
    \[
    \mu=m\,\bar\mu_m=\Omega(m), 
    \]
        which gives the required lower bound on the PL constant. 
    
    The matching upper bound, namely $\mu=\mathcal{O}(m)$, will follow from the Lipschitz argument below, since any PL constant of an $L$-smooth function satisfies $\mu\le L$.
    To that end, we now 
    prove the upper bound $L=\mathcal{O}(m)$ for the Lipschitz constant $L$ of the total cost function $F_p$. 
    The matching lower bound, i.e.\ $L=\Omega(m),$ required to show Assumption~\ref{ass:linear_scaling_L}, then follows directly from the PL scaling established above and the same inequality $\mu\le L$.
    Starting from the local cost contributions $f_{p,g}(\vec{\theta})$, note that for fixed depth $p$ and bounded maximum degree, each $f_{p,g}(\vec{\theta})$ is a smooth function of the QAOA parameters $\vec \theta$, so its gradient-Lipschitz constant $L_g$ is bounded by some constant $L^*$:   
    \begin{equation}
    \label{eq:local_cost_lipschitz_bound}
    \bigl\|\nabla f_{p,g}(\vtheta_1)-\nabla f_{p,g}(\vtheta_2)\bigr\| 
    \le L_g\|\vtheta_1-\vtheta_2\| \leq
    L^*\|\vtheta_1-\vtheta_2\| \quad \forall g,\vtheta_1,\vtheta_2.
    \end{equation} 
    Since $L^*$ is determined solely by the finite set of radius-$p$ neighborhood types allowed by the fixed depth and bounded maximum degree, it is invariant of graph size, i.e., $L^*=\mathcal{O}(1)$.
    Next, we denote by $\bar L_m$ the gradient-Lipschitz constant of the averaged per-edge cost $\bar f_p(\vtheta;G_m)$. 
    From the linearity of the gradient and Eq.~\eqref{eq:averaged_per_edge_cost} we have 
    \[
    \nabla \bar f_p(\vec{\theta};G_m)
    =
    -\sum_g \frac{w_g(G_m)}{m}\,\nabla f_{p,g}(\vec{\theta}), 
    \]
    which, using the triangle inequality and Eq.~\eqref{eq:local_cost_lipschitz_bound}, gives 
    \[
    \begin{aligned}
    \bigl\|\nabla \bar f_p(\vec{\theta}_1;G_m)&-\nabla \bar f_p(\vec{\theta}_2;G_m)\bigr\| \\
    &\le
    \sum_g \frac{w_g(G_m)}{m}
    \bigl\|\nabla f_{p,g}(\vec{\theta}_1)-\nabla f_{p,g}(\vec{\theta}_2)\bigr\| \\
    & \le \underbrace{\sum_g \frac{w_g(G_m)}{m}}_{1}
     L^*  \bigl\|\vec{\theta}_1-\vec{\theta}_2\bigr\|
    =
    L^*\bigl\|\vec{\theta}_1-\vec{\theta}_2\bigr\|, \qquad \forall  \vec{\theta}_1, \vec{\theta}_2
    \end{aligned}
    \]
    where we used the fact that $\sum_g w_g(G_m)=m$. 
    By definition, the gradient-Lipschitz constant of the averaged per-edge cost, $\bar L_m$, is the smallest value satisfying:
    \[\left\| \nabla \bar f_p(\vec{\theta}_1;G_m) - \nabla \bar f_p(\vec{\theta}_2; G_m) \right\| \le \bar L_m \| \vec{\theta}_1 - \vec{\theta}_2 \|,
    \]
     from which it follows that $\bar L_m \leq L^*$.  Combined with the previous observation that $L^* = \mathcal{O}(1)$, this ensures that $\bar L_m$ is independent of graph size, that is $\bar L_m = \mathcal{O}(1)$.
    Finally, we consider the gradient-Lipschitz constant $L$ of the total cost function $F_p(\vec{\theta};G_m)$. Since $F_p(\vec{\theta};G_m)=m\,\bar f_p(\vec{\theta};G_m)$, 
    we have
    \[
    \begin{aligned}
    \bigl\|\nabla F_p(\vec{\theta}_1;G_m)&-\nabla F_p(\vec{\theta}_2;G_m)\bigr\| \\
    &=
    m\bigl\|\nabla \bar f_p(\vec{\theta}_1;G_m)-\nabla \bar f_p(\vec{\theta}_2;G_m)\bigr\| \\
    &\le
    m\bar L_m\|\vec{\theta}_1-\vec{\theta}_2\|.
    \end{aligned}
    \]
    Hence $F_p(\vec{\theta};G_m)$ is $L$-smooth with $L \leq m\bar L_m = \mathcal{O}(m)$, which proves the upper bound. 
    Moreover, any function that is both $L$-smooth and satisfies the PL inequality with constant $\mu$ obeys $\mu\le L$. Since the PL argument above gives $\mu=\Theta(m)$, it follows that $L=\Omega(m)$, which finally gives
    \[
    L=\Theta(m), 
    \]
    consistent with Assumption~\ref{ass:linear_scaling_L}.
    Moreover, together with $\mu\le L=\mathcal{O}(m)$, the lower bound $\mu=\Omega(m)$ also gives 
    \[\mu=\Theta(m),
    \] as stated in Assumption~\ref{ass:linear_scaling_mu}.
\subsection{Results of SGD with the \textit{Parameter Shift Rule}}
\label{app:parameter_shift_rule_results}
For completeness, we include the analogue of Fig.~\ref{fig:3regular_sgd_finite_difference} for the parameter-shift estimator. The goal here is to compare the mean values and spreads of the $d_t$ trajectories between the two gradient-estimation methods, namely the parameter-shift rule (Fig.~\ref{fig:2regular_sgd_param_shift}) and the finite-difference approximation (Fig.~\ref{fig:2regular_sgd_finite_diff_appendix}), under the same inverse shot schedule. Unlike the main-text 3-regular experiment, we use here cycle graphs because, in our simulations, the parameter-shift estimator did not reliably converge for the larger random 3-regular instances.
From Eq.~\eqref{eq:shots_per_case} it is expected that maintaining a fixed target relative cost gap in the parameter-shift setting requires an approximately size-independent shot budget, while for the finite-difference approximation, an inverse shot schedule is sufficient. To empirically test this prediction, the sampled cost evaluations entering the gradient estimator are intentionally assigned a shot budget $\shots$ that scales inversely with the number of edges, i.e.\ $\shots \propto 1/m$ (as in Fig.~\ref{fig:3regular_sgd_finite_difference}). This allows us to explicitly visualize the resulting growth of the trajectory spread in the parameter-shift rule setting, while observing that it remains stable in the finite-difference case.
Overall, the distinct behaviors displayed across Figs.~\ref{fig:2regular_sgd_param_shift} and~\ref{fig:2regular_sgd_finite_diff_appendix} under the same inverse-shot schedule are consistent with the different shot-scaling  prescriptions of Eq.~\eqref{eq:shots_per_case}.
   \begin{figure}[H]
        \centering
        \subfloat[\textbf{Parameter-shift rule:} Number of shots scales inversely with the number of edges in \textbf{cycle graphs}.]
        {\includegraphics[width=0.45\linewidth]{figs/2regular_results/2regular_sgd_convergence_param_shift_inverse_shots.png}}
        \quad
       \subfloat[Higher resolution: the corresponding standard deviation $\sigma(d_t)$ as a function of iterations.]
       {\includegraphics[width=0.45\linewidth]{figs/2regular_results/2regular_std_dev_param_shift_inverse_shots.png}}
       \caption[Representative SGD trajectories with parameter-shift estimation]
        {\protect\textbf{Relative cost gap ($d_t$) trajectories on cycle graphs using parameter-shift estimation under inverse shot scaling.} Cycle graphs with $N\in\{10,14,18,22\}$ nodes, corresponding to $m=N$ edges, were considered, with $10$ independent finite-shot SGD trajectories per graph size, at circuit depth $p=1$. All trajectories were initialized from the same fixed parameter vector $(\gamma_1,\beta_1)=(0,0.5)$. The values $d_t$ were evaluated from exact state-vector cost expectations along the SGD trajectories, using the corresponding noiseless reference value according to Eq.~\eqref{eq:d_t_relative_cost_gap}. The parameter-shift gradient estimator was employed, using a number of shots per circuit evaluation proportional to $1/m$. \textbf{(a)} The mean relative cost gap $d_t$ is shown as a function of the iteration number (solid lines) together with the standard deviation $\pm\,\sigma(d_t)$ (shaded). \textbf{(b)} The corresponding standard deviation $\sigma(d_t)$. Across the tested sizes, the convergence rates remain broadly comparable, consistent with the approximately size-independent iteration scaling, while the trajectory spread increases with graph size, illustrating why Eq.~\eqref{eq:shots_per_case} prescribes an approximately size-independent shot budget in the parameter-shift setting.}    
       \label{fig:2regular_sgd_param_shift}
    \end{figure}
    \begin{figure}[H]
        \centering
        \subfloat[\textbf{Finite-difference approximation:} Number of shots scales inversely with the number of edges in \textbf{cycle graphs}.]
        {\includegraphics[width=0.45\linewidth]{figs/2regular_results/2regular_sgd_convergence_finite_diff_inverse_shots.png}}
        \quad
       \subfloat[Higher resolution: the corresponding standard deviation $\sigma(d_t)$ as a function of iterations.]
       {\includegraphics[width=0.45\linewidth]{figs/2regular_results/2regular_std_dev_finite_diff_inverse_shots.png}}
       \caption[Representative SGD trajectories using finite-difference estimation]{\protect\textbf{Relative cost gap ($d_t$) trajectories on cycle graphs using finite-difference estimation under inverse shot scaling.} Cycle graphs with $N\in\{10,14,18,22\}$ nodes, corresponding to $m=N$ edges, were considered, with 10 independent finite-shot SGD trajectories per graph size, at circuit depth $p=1$. All trajectories were initialized from the same fixed parameter vector $(\gamma_1,\beta_1)=(0,0.5)$. The values $d_t$ were evaluated from exact state-vector cost expectations along the SGD trajectories, using the corresponding noiseless reference value according to Eq.~\eqref{eq:d_t_relative_cost_gap}. A finite-difference gradient estimator was employed, using a number of shots per circuit evaluation proportional to $1/m$, as in Fig.~\ref{fig:2regular_sgd_param_shift}. \textbf{(a)} The mean relative cost gap $d_t$ is shown as a function of the iteration number (solid lines) together with the standard deviation $\pm\,\sigma(d_t)$ (shaded). \textbf{(b)} The corresponding standard deviation $\sigma(d_t)$. Compared with the parameter-shift estimator in Fig.~\ref{fig:2regular_sgd_param_shift}, the trajectory spread remains stable across graph sizes, consistent with Eq.~\eqref{eq:shots_per_case}, which predicts that inverse shot scaling is sufficient for the finite-difference setting.}
       \label{fig:2regular_sgd_finite_diff_appendix}
    \end{figure}
\paragraph{Using the general parameter shift rule}
In the main text, we used the parameter-shift rule, in its original, gate-based formulation. Here, we analyze the \textit{general} parameter-shift rule \cite{Wierichs_2022}. 
Consider a unitary evolution of the form $U(\theta)=e^{-i\theta A}$, where $A$ is a Hermitian operator. The expectation value $\langle \mathcal{H}\rangle_{\psi(\theta)}$ admits a finite Fourier series expansion with frequencies determined by the pairwise differences between the eigenvalues of $A$. 
When the induced positive frequency set is equally spaced, we write it as $\{\Delta,2\Delta,\ldots,R\Delta\}$, where $R$ is the number of positive frequencies and $\Delta$ is the spacing between adjacent frequencies.
We consider only this equidistant-shifts case in the following analysis, a more general treatment can be found in \cite{Wierichs_2022}. The exact derivative can then be written as:
\begin{equation}
    \label{eq:gen_psr}
    \frac{\partial \langle \mathcal{H}\rangle_{\psi(\theta)}}{\partial \theta}
    =
    \sum_{j=1}^{2R}
        c_j\,
        \langle \mathcal{H}\rangle_{\psi(\theta + s_j)},
    \quad
    s_j := \frac{2j-1}{2R}\frac{\pi}{\Delta},\quad
    c_j := \Delta\frac{(-1)^{j-1}}{4R\sin^2\!\left(\frac{2j-1}{4R}\pi\right)},
\end{equation}
where the known normalized formula with shifts $s_j=\frac{(2j-1)\pi}{2R}$ is a special case of $\Delta=1$.
For QAOA MaxCut, the frequency count $R$ scales at least linearly with the graph size. Specifically, the frequency spectrum of the mixer Hamiltonian $B$ scales linearly with the number of nodes; its eigenvalues $\{-N,-N+2,\dots,N\}$ yield a set of positive pairwise differences $\{2,4,\dots,2N\}$, resulting in $R=N$. Additionally, for the problem Hamiltonian $C$, the number of relevant positive frequencies scales at most linearly with the number of edges $m$~\cite{Wierichs_2022}. 
Using Eq.~\eqref{eq:gen_psr} and following Eqs.~\eqref{eq:partial_derivative_variance} and \eqref{eq:ws_scaling_cases}, we define $w_s^{\text{gps}}$, the prefactor of the variance of the partial derivative estimator for the general parameter-shift case and lower bound it as follows: 
\[
w_s^{\mathrm{gps}}
=
\sum_{j=1}^{2R}c_j^2
\ge {c_1}^2 = \frac{\Delta^2}{16R^2\sin^4\!\left(\frac{2-1}{4R}\pi\right)}
\approx \frac{\Delta^2}{16R^2\!\left(\frac{\pi}{4R}\right)^4} = \frac{16R^2\Delta^2}{\!\pi^4}, 
\]
using $sin(x)\approx x$ for $x<<1$.
This means $w_s^{\mathrm{gps}}=\Omega(\Delta^2R^2)$.
For QAOA MaxCut on bounded-degree graph families with $N=\Theta(m)$, this gives $w_s^{\mathrm{gps}}=\Omega(m^2)$. 
\end{appendices}
\bibliography{sn-bibliography}
        
\newpage

\end{document}